\newcommand{\abs}{{\tt abs}}
\newcommand{\thee}{\mbox{\it TH}}
\newtheorem{theorem}{Theorem}
\newtheorem{lemma}{Lemma}
\newtheorem{remark}{Remark}
\def\orcidID#1{\unskip$^{\mbox{\href{https://orcid.org/#1}{\scriptsize{[#1]}} }}$}
\begin{document}

\title{Maximizing the Sum of the Distances between Four Points on the Unit Hemisphere\thanks{Supported by NSFC under Grant Nos. 12071282,  61772203.} 
}

\author{Zhenbing Zeng\orcidID{0000-0002-9728-1114} 
\institute{Shanghai University,  Department of Mathematics \\  Shanghai 200444, China}
\email{zbzeng@shu.edu.cn}
\and
Jian Lu\orcidID{0000-0002-0487-259X} 
\institute{Shanghai University,  Department of Mathematics \\  Shanghai 200444, China}
\email{lujian@picb.ac.cn} 
\and
\quad Yaochen Xu\orcidID{0000-0002-5039-2781} 
\institute{Shanghai University,  Department of Mathematics \\  Shanghai 200444, China}
\email{xuyaochen@sibcb.ac.cn}
\and
Yuzheng Wang\orcidID{0000-0002-0013-3466}
\institute{Shanghai University,  Department of Mathematics \\  Shanghai 200444, China}
\email{1290666930@qq.com}
}

\def\titlerunning{Maximizing the Sum of the Distances between Four Points on the Unit Hemisphere}
\def\authorrunning{Zeng, Lu, Xu \& Wang}

\maketitle              
\begin{abstract}
In this paper,  we prove a geometrical inequality which states that for any four points 
on a hemisphere with the unit radius,  the largest sum of distances between the points is $4+4\sqrt{2}$.
In our method,  we have constructed a rectangular neighborhood of the local maximum point in the feasible set,  
which size is explicitly determined, 
and proved that (1): the objective function is bounded by a quadratic polynomial which takes the local maximum point
as the unique critical point in the neighborhood,  and
(2): the rest part of the feasible set can be partitioned into a finite union of a large number of very small cubes 
so that on each small cube the conjecture can be verified by estimating the objective function with exact numerical computation. 

\end{abstract}
\section{Introduction}\label{introduction}
Assume that four points are placed on the hemisphere of the unit radius:
$$
S^2_{\geq 0}:=\{(x, y, z)|x^2+y^2+z^2=1,  z\geq 0\},  
$$ 
we want to find the largest value of the sum of distances between them. 
A similar problem for maximizing the sum of distances between $n$ points 
on the unit sphere has been studied by many people in past 
(see \cite{toth56},  \cite{stolarsky73},  and \cite{moser} for example),  where
we can see that the problem for four points is very easy and for $n\geq 5$ it becomes very difficult. 
To our knowledge,  the question for points on hemispheres seems not settled yet in the literature.

As we will prove in Lemma~\ref{lemma0},  if the sum of distances between four points is maximal,  
then the center of the sphere must lie in the interior or on one of the surfaces of the tetrahedron 
formed by the four points,  
which immediately implies that,  as shown in Fig.~\ref{fig-fourpointsonhemisphere},  
at least three of the four points must lie on the equator of the hemisphere: 
$$
S^1:=\{(x, y, 0)|x^2+y^2=1\}.
$$

\begin{figure}[!ht]
\begin{center}
\includegraphics[width=0.55\linewidth,  trim=0 30 0 0,  clip]{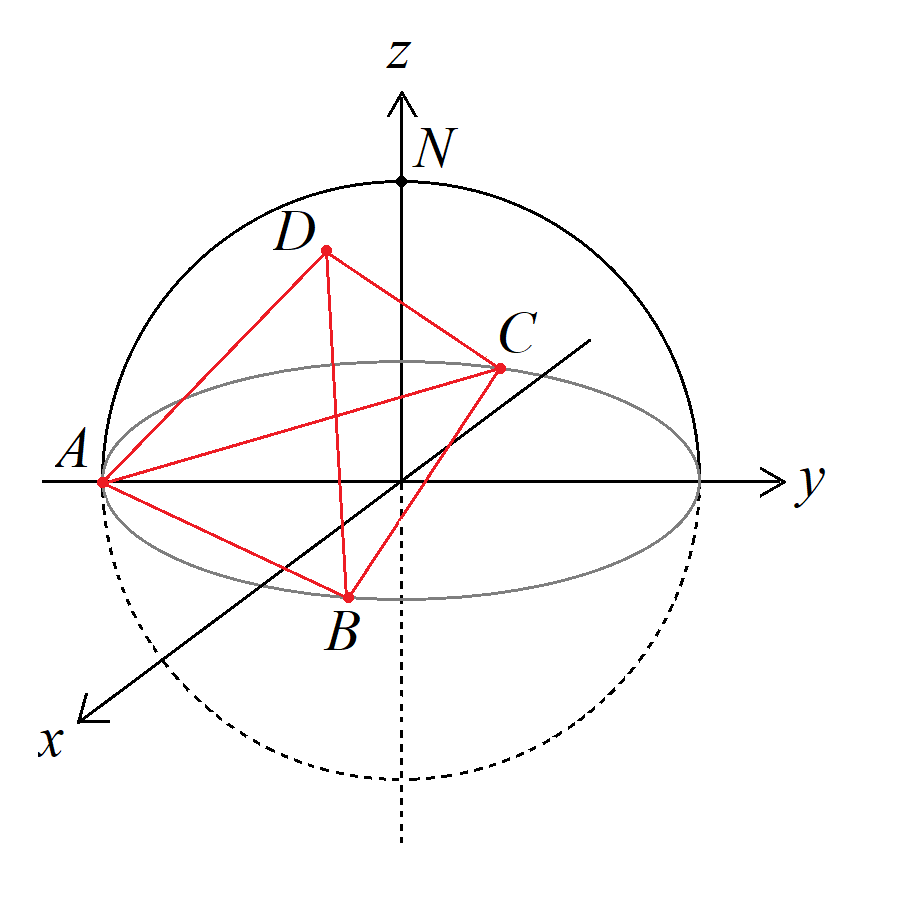}
\end{center}
\caption{Four points on a hemisphere,  three of them on the equator.}
\label{fig-fourpointsonhemisphere}
\end{figure}

\noindent  
In the beginning,  we guessed that the optimal configuration 
is formed by three vertices of an equilateral triangle inscribed in the equator 
with the fourth point at the North Pole,  so 
the largest sum is equal to $3\sqrt{3}+3\sqrt{2}\approx 9.43879311\cdots$. 
Soon we found that $3\sqrt{3}+3\sqrt{2}$ is not the maximal,  since if we put all four points on the equator in a regular square form,  
then the sum of distances is:
$$
4\sqrt{2}+4=9.65685424\cdots > 3\sqrt{3}+3\sqrt{2}\approx 9.43879311\cdots.
$$
Indeed,  it is easy to see that if $A, B, C$ are arbitrary points on the equator 
and the point $D\in S^2_{\geq 0}$ is located very close to the North Pole $N$,  
namely,  if the distance $d(D, N)=DN\leq 0.07268$,  then we have
\begin{eqnarray*}
&(AB+BC+CA)+AD+BD+CD\leq 3\sqrt{3}+(AN+DN)+(BN+DN)
&\\
&+(CN+DN)\leq 3\sqrt{3}+3\sqrt{2}+3\times 0.07268 \leq 9.4388+0.21804<9.65685, 
&
\end{eqnarray*} 
and therefore,  $\{A, B, C, D\}$ is not the optimal configuration. 

Considering that the optimal configuration is invariant under the rotation around the $z$-axis,  
we can express the optimal problem to a  
non-linear programming as follows:
\begin{eqnarray}\label{four-point-problem}
\max && f:=\sum_{0\leq i<j\leq 3}\sqrt{(x_i-x_j)^2+(y_i-y_j)^2+(z_i-z_j)^2}, \\
\nonumber \mbox{s.t.}&& x_i^2+y_i^2+z_i^2=1,  \; i=0, 1, 2, 3, \\
\nonumber &&x_0=0, y_0=-1, z_0=z_1=z_2=0,  \; z_3\geq 0.
\end{eqnarray}
We may try to solve this problem by using the Lagrangian multiplier method and symbolic computer algebra. 
It is easy to prove that 
$$
A:=(0, -1, 0), \;
B:=(1, 0, 0), \;
C:=(-1, 0, 0), \;
D:=(0, 1, 0)
$$
form a local optimal solution of Problem~\eqref{four-point-problem},  but it is hard to prove that it is also a global maximum.  
The attempt for proving that $(A, B, C, D)$ is the unique critical point of $f: (S^1)^2\times S^2_{\geq 0}\rightarrow R$ 
was not successful since too large objects occurred in the elimination process. 
We have also applied several numerical algorithms to search the optimal configuration 
and found from experiments that $4\sqrt{2}+4$ seems to be the real global maximum. 
In this paper, we devote ourselves to construct a proof of this fact by combining numerical and symbolic computation. 
Our strategy is as follows. In the first stage (the {\em numerical global search} stage) we prove that
\begin{theorem}\label{th1}
If $A, B, C\in S^1$ and $D\in S^2_{\geq 0}$ form an optimal configuration for Problem \eqref{four-point-problem},  
then,  up to a rotation of hemisphere around the $z$-axis,  we have
$$
A=(0, -1, 0),  \quad B\in U,  \quad C\in V,  \quad D\in W
$$
where
\begin{eqnarray}
&&
U:=\{(x, y, 0)|1-\delta_1<x\leq 1, \; -\delta_2<y<\delta_2\}\\
&&
V:=\{(x, y, 0)|-1\leq x<-1+\delta_1, \; -\delta_2<y<\delta_2\}\\
&&
W:=\{(x, y, z)|-\delta_2<x<\delta_2,  \; 1-\delta_1<y\leq 1,  \; 0\leq z<\delta_2\} 
\end{eqnarray}
and $\delta_1=1/32,  \delta_2=1/4$. 
\end{theorem}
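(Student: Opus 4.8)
The plan is to reduce Theorem~\ref{th1} to a finite, rigorously checkable computation by a branch-and-bound argument. First I would use the rotational freedom about the $z$-axis — together with Lemma~\ref{lemma0}, which guarantees that at an optimum three of the four points lie on the equator — to normalise one equatorial point to $A=(0,-1,0)$ and to parametrise the rest by angles: $B=(\cos s,\sin s,0)$, $C=(\cos t,\sin t,0)$ and $D=(\sin\phi\cos\psi,\sin\phi\sin\psi,\cos\phi)$ with $\phi\in[0,\pi/2]$. Every pairwise distance is then an explicit elementary function: $2|\sin((s-t)/2)|$ and the like for two equatorial points, and $\sqrt{2-2\sin\phi\cos(\psi-\theta)}$ for $D$ against an equatorial point at angle $\theta$. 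Thus $f$ becomes $F(s,t,\phi,\psi)=\sum_{k}\sqrt{u_k}$, with the six $u_k$ trigonometric polynomials whose partial derivatives are bounded by $2$. The reflection $x\mapsto -x$ is an isometry fixing $A$ and preserving the hemisphere, and the two equatorial points $B,C$ (and, when all four points lie on the equator, the choice of which to call $D$) may be relabelled; so it suffices to prove that $F<4+4\sqrt2$ at every point of a fundamental domain $\Omega$ for this finite group that does not lie in the region $U\times V\times W$. Since the square configuration attains $F=4+4\sqrt2$, any optimal configuration has $F\ge4+4\sqrt2$, so such a bound forces the optimum — after the normalising rotation and a suitable relabelling — into the claimed neighbourhood.

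Next I would split $\Omega$ into a near-coincident part and the rest. If two of the four points coincide, then $f$ collapses to $2d_{13}+2d_{14}+d_{34}$ for three points of $S^2$, and a short computation (elementary, or via a Lagrange condition) bounds this by a constant $M<4+4\sqrt2$; moreover moving a single point by a distance $\eta$ changes $f$ by at most $3\eta$, since only the three distances incident to that point are affected. Hence on the subset of $\Omega$ where some two of the four points are within $\epsilon$ of each other one has $f\le M+3\epsilon<4+4\sqrt2$ as soon as $\epsilon<(4+4\sqrt2-M)/3$, and this subset may be discarded at once. On the complement every $u_k\ge\epsilon^2$, so $\sqrt{u_k}\ge\epsilon$ and $\|\nabla F\|$ is bounded on the whole remaining region by an explicit constant $L=L(\epsilon)$. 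Now cover that region by finitely many small boxes: on a box of half-diameter $r$ one has $\sup F\le F(\text{center})+Lr$; evaluate $F$ at the center with rigorous (interval or exact rational) arithmetic, accept the box if the bound is below $4+4\sqrt2$, and subdivide otherwise, setting aside any box that meets $U\times V\times W$. Because $F$ is continuous and strictly below $4+4\sqrt2$ on the closed complement of the target neighbourhood — which is precisely what the computation confirms — and that set is compact, $F$ is bounded there by $4+4\sqrt2-\mu$ for some $\mu>0$, so subdivision down to boxes of half-diameter below $\mu/L$ terminates after finitely many steps, producing the ``large number of very small cubes'' of the abstract.

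The remaining effort is organisational and arithmetical: running the adaptive subdivision; certifying each center-value bound with controlled precision so that the inequalities are genuine rather than floating-point artefacts; handling carefully the split between the near-coincident part and the non-degenerate part; and verifying that every box the procedure declines to certify lies inside $U\times V\times W$ (respectively its symmetric images), together with the converse bookkeeping that a given optimal configuration can indeed be brought into that region by a rotation and a choice of labels. I expect the real obstacle to be quantitative sharpness rather than logic: near the square $F$ is flat to second order, so the first-order bound $F(\text{center})+Lr$ degrades near the boundary of $U\times V\times W$ and the subdivision must go very deep there — which is exactly why the complementary ``near the optimum'' region is left to the paper's second, symbolic stage, where $F$ is instead bounded above by a quadratic polynomial with a unique critical point. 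Keeping $L$, $\epsilon$, $\mu$ and the box count under explicit control — possibly after some cheap a priori shrinking of the search region of the kind sketched in the Introduction — is where the genuine work lies.
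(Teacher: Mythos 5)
Your proposal is sound and follows the same overall route as the paper: both reduce Theorem~\ref{th1} to a branch-and-bound subdivision of the compact feasible set minus the target neighbourhood, certify $f<4+4\sqrt{2}$ on each piece by rigorous arithmetic, and invoke continuity/compactness (Borel--Lebesgue) to guarantee termination precisely when the strict inequality is true -- so, like the paper's Section 4, it is a proof contingent on actually running the certified computation. The differences are in the machinery inside the loop. The paper works with the warp-distance formulation on $(S^1)^2\times D^2$, covers it by axis-parallel cubes of edge $1/8$, $1/32$, \ldots in ${\Bbb R}^6$, and prunes with two tests: a \emph{distance bound test} based on Lemma~\ref{lemma2} (necessary lower bounds $d_1\geq 0.992$, etc., on the sorted distances at an optimum), which cheaply kills the vast majority of cubes, and a per-cube \emph{distance sum test} that upper-bounds each of the six distances on the cube directly, so no global Lipschitz constant is needed and near-coincident configurations require no special treatment. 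You instead use angular coordinates (four parameters rather than six), dispose of near-coincident configurations analytically (your bound $M\approx 8.8<4+4\sqrt{2}$ for a doubled point, plus the $3\eta$ perturbation estimate, is correct), and then certify boxes by centre value plus an explicit Lipschitz radius $Lr$ with $L=L(\epsilon)$; this buys self-containedness (no analogue of Lemma~\ref{lemma2} to prove) at the cost of a first-order bound that forces deeper subdivision than the paper's interval-style distance estimates, especially near the flat maximum. One point where you are actually more careful than the paper's statement: you restrict to a fundamental domain for the relabelling/reflection symmetries (or equivalently excise the symmetric images of $U\times V\times W$), and this is not optional -- relabelled square configurations (e.g. $B=(0,1,0)$) lie in the complement of $U\times V\times W$ and attain $4+4\sqrt{2}$ exactly, so without that reduction the subdivision could not terminate; the paper's ``up to a rotation'' phrasing glosses over this, and your ``rotation plus choice of labels'' version is what is really proved and is all that is needed to combine with Theorem~\ref{th2}.
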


\noindent In this stage,  we divide the set of the feasible points of  Problem \eqref{four-point-problem},  namely, 
$(S_1)^2\times S^2_{\geq 0}\setminus (U\times V\times W)$,   
 into a disjoint union of finitely many small cubes and check the conjecture on each cube 
through estimating the upper bound of the objective function on that cube
with exact numerical computation of computer algebra software like {\sc Maple}, 
throw away those cubes where the conjecture is proved,  
and do branch-and-bound process recursively on the remained cubes. 
According to the Borel-Lebesgue covering theorem and the continuity of the objective function, 
this process will be terminated in finitely many rounds if the strict inequality 
$f<4\sqrt{2}+4$ is actually true on the consideration set. 

In the second stage (the {\em local critical analysis} stage),  we use symbolic computation to prove the following inequality:
\begin{theorem}\label{th2}
If $U, V, W$ are defined in Theorem~\ref{th1}. Then for any $B\in U,  C\in V,  D\in W$,  the following inequality is valid:
$$
AB+BC+CA+DA+DB+DC\leq 4+4\sqrt{2}.
$$
\end{theorem}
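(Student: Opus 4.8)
The plan is to introduce local coordinates on $U\times V\times W$ centred at the conjectured optimum, to write $f$ as a sum of six explicit square roots in them, and then to bound $f$ above by a concave quadratic polynomial whose unique critical point is that optimum, where its value equals $4+4\sqrt2$. First, since $B\in U$, $C\in V$, $D\in W$ all lie on the unit sphere, I would parametrise
\[
B=\bigl(\sqrt{1-s^2},\,s,\,0\bigr),\qquad
C=\bigl(-\sqrt{1-t^2},\,t,\,0\bigr),\qquad
D=\bigl(u,\,\sqrt{1-u^2-w^2},\,w\bigr),
\]
so that the optimum $B=(1,0,0)$, $C=(-1,0,0)$, $D=(0,1,0)$ is $(s,t,u,w)=(0,0,0,0)$. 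With $\delta_1=\tfrac1{32}$, $\delta_2=\tfrac14$ the conditions defining $U,V,W$ become the region $\mathcal R=\{\,|s|<\tfrac{\sqrt{63}}{32},\ |t|<\tfrac{\sqrt{63}}{32},\ u^2+w^2<\tfrac{63}{1024},\ w\ge0\,\}$, a box times a half--disc, with the $\delta_1$--conditions (not the $\delta_2$--ones) binding. Using $|A|=|B|=|C|=|D|=1$ one gets the closed forms
\[
AB=\sqrt{2+2s},\qquad CA=\sqrt{2+2t},\qquad DA=\sqrt{2+2\sqrt{1-u^2-w^2}},
\]
\[
BC=\sqrt{2+2\sqrt{(1-s^2)(1-t^2)}-2st},\qquad DB=\sqrt{\,2-2\bigl(u\sqrt{1-s^2}+s\sqrt{1-u^2-w^2}\bigr)\,},
\]
\[
DC=\sqrt{\,2-2\bigl(t\sqrt{1-u^2-w^2}-u\sqrt{1-t^2}\bigr)\,},
\]
which at the origin take the values $\sqrt2,\sqrt2,2,2,\sqrt2,\sqrt2$, of sum $4+4\sqrt2$. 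On $\overline{\mathcal R}$ every radicand is bounded below by a positive constant, so $f$ is real--analytic there, and each term depends on $w$ only through $w^2$.

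The core step is to majorise each distance by a quadratic polynomial in $(s,t,u,w)$ valid on all of $\mathcal R$. The tools are: (i) the concavity bounds $\sqrt{1+x}\le 1+\tfrac x2-\tfrac{x^2}{8}+\tfrac{x^3}{16}$ for $|x|\le\tfrac12$ (both signs — the Taylor tail beyond the cubic term has a fixed sign on each half-line), and $\sqrt{1-r}\le 1-\tfrac r2$, $\sqrt{4-y}\le 2-\tfrac y4$; and (ii) elementary estimates converting the residual cubic/quartic terms into quadratic ones on the small region, such as $s^2t^2\le\tfrac{63}{2048}(s^2+t^2)$, $|s|^3\le\tfrac{\sqrt{63}}{32}\,s^2$, and weighted AM--GM splittings $|s|\,w^2\le\tfrac12\varepsilon s^2+\tfrac{63}{2048\varepsilon}w^2$. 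Concretely $AB$, $CA$ are handled by (i) directly; $DA\le 2-\tfrac14(u^2+w^2)$ via $\sqrt{1-r}\le1-\tfrac r2$ then $\sqrt{4-r}\le2-\tfrac r4$; $BC$ by first obtaining $BC^2\le 4-(s+t)^2+s^2t^2$ and then $\sqrt{4-y}\le 2-\tfrac y4$; and $DB$, $DC$ by writing $DB=\sqrt2\,\sqrt{1-\sigma}$ with $\sigma=u\sqrt{1-s^2}+s\sqrt{1-u^2-w^2}=(s+u)-\rho$ (where $\rho$ is a small order--three perturbation and $|\sigma|\le\tfrac{\sqrt{63}}{16}<\tfrac12$), applying (i) to $\sqrt{1-\sigma}$, and then expanding $\sigma,\sigma^2,\sigma^3$ about $s+u$ using (ii) — the point of this substitution being that it preserves the genuine negative quadratic term $-\tfrac1{4\sqrt2}(s+u)^2$ that a naive tangent--line bound at $\sigma=0$ would lose. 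Summing the six estimates yields $f\le Q(s,t,u,w)$ on $\mathcal R$ with $Q(0,0,0,0)=4+4\sqrt2$; the linear parts cancel ($\tfrac1{\sqrt2}(s+t)$ from $AB+CA$ against $-\tfrac1{\sqrt2}(s+u)$ and $\tfrac1{\sqrt2}(u-t)$ from $DB,DC$); and, up to the losses in (ii), the quadratic part of $Q$ is exactly the Hessian form of $f$ at the optimum,
\[
-\tfrac1{4\sqrt2}\bigl(s^2+t^2+(s+u)^2+(u-t)^2\bigr)-\tfrac14(s+t)^2-\tfrac14(u^2+w^2)
= -\tfrac1{2\sqrt2}\bigl(s^2+t^2+u^2+su-ut\bigr)-\tfrac14(s+t)^2-\tfrac14(u^2+w^2),
\]
which is negative definite: the symmetric matrix of $s^2+t^2+u^2+su-ut$ has eigenvalues $1$ and $1\pm\tfrac1{\sqrt2}$, and the $w$--direction is controlled by the isolated $-\tfrac14w^2$. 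Choosing the splitting weights $\varepsilon$ in (ii) so that the step-(ii) perturbations stay inside this spectral margin, $Q$ is a concave quadratic.

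To finish: a concave quadratic with vanishing gradient at the origin has the origin as its unique critical point and its global maximiser, with value $Q(0)=4+4\sqrt2$; since $Q$ is even in $w$, this bound also holds over $w\ge0$. Hence $AB+BC+CA+DA+DB+DC=f\le Q\le 4+4\sqrt2$ everywhere on $\mathcal R$, which is the assertion. The step I expect to be the main obstacle is (ii) carried out with honest constants: because $\mathcal R$ is not tiny ($\delta_2=\tfrac14$, so cubic terms are not negligible against quadratic ones), a crude worst--case conversion of cubic cross--terms into quadratics can consume the whole spectral margin. The $w$--direction is the most delicate, since $DA$ supplies only $-\tfrac14w^2$ while each of $DB,DC$ returns a spurious positive $w^2$ contribution of size close to $\tfrac18$ if one merely bounds $|s|w^2\le\tfrac{\sqrt{63}}{32}w^2$; keeping such terms partly cubic, or absorbing them by AM--GM into the comfortably negative $s^2,t^2$ coefficients, is what forces the final $4\times4$ form to be negative definite. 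Certifying that definiteness reduces to checking four Sylvester sign conditions on explicit rational numbers, which is where exact computer--algebra arithmetic enters.
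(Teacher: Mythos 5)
Your proposal is correct in outline and shares the paper's overall architecture: majorize each square root by the cubic Taylor bound $\sqrt{1-x}\le 1-\frac{x}{2}-\frac{x^2}{8}-\frac{x^3}{16}$ (the paper's Lemma~\ref{taylor}), convert the residual higher-order terms into quadratics using the smallness of the box (the paper's Lemma~\ref{inequality-for-morse-lemma} is a uniform version of your AM--GM splittings), and finish by checking negative (semi)definiteness of an explicit symmetric matrix. The execution, however, is genuinely different. The paper first projects $D$ to the unit disk (the ``warp distance'') and uses the rational parametrization $x_1=\frac{1-s^2}{1+s^2}$, etc., so that after clearing a positive denominator the whole estimate becomes one explicit polynomial $J(s,t,u,v)$ with $1288$ monomials, which is then processed mechanically (the sign-stripping map $\mathcal T$, the degree-flattening maps $\mathcal S_d$, a $3\times 3$ matrix test); it also invokes the orientation condition $s+t\ge 0$ (condition (iv) of Theorem~\ref{morse-lemma}, inherited from Lemma~\ref{lemma0}) to discard a cubic term. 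Your square-root coordinates avoid that condition entirely: your Hessian form $-\frac{1}{2\sqrt2}(s^2+t^2+u^2+su-ut)-\frac14(s+t)^2-\frac14(u^2+w^2)$ is negative definite unconditionally (your eigenvalue computation $1,\,1\pm\frac{1}{\sqrt2}$ is right), so you would prove the literal statement of Theorem~\ref{th2} without the optimality-only hypothesis; the price is that the bookkeeping is by hand rather than mechanical, since the nested radicals $\sqrt{1-s^2}$, $\sqrt{1-u^2-w^2}$ prevent reduction to a single polynomial. Your own warning about the $w$-direction is indeed the pressure point: with $|s|,|t|\le\sqrt{63}/32$, the terms $\frac{\sqrt2}{2}\,s\,(1-\sqrt{1-u^2-w^2})$ from $DB$, its analogue in $DC$, and the second-order corrections coming from $\sigma^2,\sigma^3$ already contribute roughly $0.24\,w^2$ against the lone $-\frac14 w^2$ supplied by $DA$, so a naive per-variable budget barely closes or fails; the AM--GM redistribution into $s^2,t^2$ that you propose does repair this with room to spare, but the weights must actually be fixed and the final $4\times4$ Sylvester check done with exact constants before the argument is complete --- this is precisely the bookkeeping the paper delegates to computer algebra, and until it is written out your text is a (sound) plan rather than a finished proof.
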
 

\noindent In this stage,  we first verify that $A=(0, -1, 0)$,  $B=(1, 0, 0), C=(-1, 0, 0), D=(0, 1, 0)$ constitute a critical point of $f$,  and the objective
function can be expressed in the form
$$
f(A, B, C, D)=4\sqrt{2}+4+\frac{1}{Q(s, t, u, v)}\left(X\cdot H\cdot X^T+\mbox{h.o.t.}\right), 
$$
where $(s, t, u, v)\in {\Bbb R}^4$ are determined by
\begin{equation*}\label{coordinates-bc}
B
=(\frac{1-s^2}{1+s^2}, \frac{2s}{1+s^2}, 0),  \quad
C
=(-\frac{1-t^2}{1+t^2}, \frac{2t}{1+t^2}, 0), 
\end{equation*}
and
\begin{equation*}\label{coordinates-d}
D
=(\frac{2u}{1+u^2}, \;\frac{1-v^2}{1+v^2}, \;z_3)\in S^2_{\geq 0}, 
\end{equation*}
$Q(s, t, u, v)$ is a positive definite polynomial,  $H$ is a negative semi-definite $4\times 4$ symmetric matrix,  
and $\mbox{h.o.t.}$ stands for higher order terms of polynomials,  
then construct a symmetric matrix $M$,  in a mechanical way,  so that  
$$
-XMX^T \leq h.o.t. \leq XMX^T,  
$$
and $H+\rho\cdot M$ is negative semi-definite when $(s, t, u, v)$ satisfies that $(B, C, D)\in U\times V\times W$,  and $-1\leq \rho\leq 1$. 

Certainly,  by combining Theorem~\ref{th1} and Theorem~\ref{th2} together,  we get a complete solution to Problem \eqref{four-point-problem}.
It is also clear that if we can construct larger neighborhoods in the local critical analysis stage,  
then the computation will be reduced significantly in the global numerical search stage,  
since $f(X)$ is changing very slowly when $X$ approaches critical points.    
The first work to use this two-stage method for proving geometric inequality can be traced to 1988  when Jingzhong Zhang 
of Chengdu Branch of the Chinese Academy of Sciences gave a machine proof (unpublished)
on a Sharp PC-1500 
pocket computer 
to a geometric inequality of Zirakzadeh \cite{zirakzadeh64}, 
which says that given a triangle $ABC$,  
any points $P, Q, R$ on the boundary of $ABC$ which divide the perimeter of $ABC$  
into three equal lengths satisfy $PQ+QR+RP\geq (AB+BC+CA)/2$. A detailed explanation of Zhang's method
is given in \cite{zeng-zhang}. 
Related to the automated deduction in the local critical analysis stage,  a general construction method for computing the size of the locally optimal regions
of multivariable homogeneous polynomials is presented in \cite{zeng-yang11}. 
Notice also that in \cite{hou-shao11} Hou et al. reported a solution for maximizing the distance sum between five points  
on the unit sphere essentially using the two-stage method.

For the sake of page limitation,  we will concentrate mainly to make an explanation
to the local critical analysis (stage 2) for Problem~\eqref{four-point-problem} in this paper.
The paper is organized as follows. 
In Section~\ref{new-prelimnaries} we transform Problem~\eqref{four-point-problem} to an equivalent non-linear optimization problem
which has relatively simple objective function ${f}: (S^1)^2\times D^2\rightarrow R$.
In Section~\ref{how-large-is-small-neighbourhood} we prove a stronger version of Theorem~\ref{th2}.
In Section~\ref{gobal-numerical-search} we concisely describe the process of global numerical search for proving Theorem~\ref{th1}.

\section{Prelimnaries}
\label{new-prelimnaries}

\begin{lemma}\label{lemma0}
If  $A, B, C, D$ are four points on the upper hemisphere 
$S^2_{\geq 0}$ such that the sum of distances between them is maximal,   
then the sphere center $O$ is contained in the interior of $ABCD$  or lies on one surface of $ABCD$.
\end{lemma}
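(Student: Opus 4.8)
The plan is to argue by contradiction via a perturbation argument. Suppose $f=\sum_{0\le i<j\le 3}|P_iP_j|$ attains its maximum at $(A,B,C,D)=(P_0,P_1,P_2,P_3)$ but $O\notin\mathrm{conv}(A,B,C,D)$; I will exhibit a feasible infinitesimal move increasing $f$. Two facts are used throughout. Writing $F_i:=\sum_{j\ne i}\frac{P_i-P_j}{|P_i-P_j|}$ for the gradient of $f$ in the variable $P_i$, one has $\sum_i F_i=0$ (the $(i,j)$ summand cancels the $(j,i)$ one) and $\langle F_i,P_i\rangle=\tfrac12\sum_{j\ne i}|P_i-P_j|>0$ (since $\langle P_i-P_j,P_i\rangle=1-\langle P_i,P_j\rangle=\tfrac12|P_i-P_j|^2$). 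I would first record that at a maximal configuration no two points coincide — a configuration with a coincidence has $f$ strictly below $3\sqrt3+3\sqrt2$, a value already realized by the inscribed equilateral triangle with the fourth point at the North Pole — so $f$ is differentiable there and all of the following is legitimate.

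Next I would dispose of two pure cases. If $z_i>0$ for every $i$, slide each point along the sphere toward the equator: $\dot P_i=z_iP_i-e_z$. This is tangent to the sphere and feasible, since $\dot z_i=z_i^2-1\le 0$ while $z_i>0$; and $\tfrac{d}{dt}f=\sum_i\langle F_i,\,z_iP_i-e_z\rangle=\sum_i z_i\langle F_i,P_i\rangle-\langle\sum_i F_i,\,e_z\rangle=\sum_i z_i\langle F_i,P_i\rangle>0$, a contradiction. If $z_i=0$ for every $i$, then $O\in\mathrm{conv}(A,B,C,D)$ is equivalent to the four equatorial points not lying in an open half-plane through $O$; if they did, take the in-plane unit normal $\mathbf n$ to the separating line, set $d_i:=\langle\mathbf n,P_i\rangle>0$, and slide within the equator by $\dot P_i=d_iP_i-\mathbf n$ (feasible, $\dot z_i=0$), whence $\tfrac{d}{dt}f=\sum_i d_i\langle F_i,P_i\rangle-\langle\sum_i F_i,\,\mathbf n\rangle=\sum_i d_i\langle F_i,P_i\rangle>0$, again a contradiction. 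So in both pure cases $O\in\mathrm{conv}(A,B,C,D)$.

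The remaining, and hardest, case is the mixed one, where some $z_i=0$ and some $z_i>0$. At a point with $z_i>0$ every tangent direction is feasible, so unless the tangential component $F_i-\langle F_i,P_i\rangle P_i$ vanishes we can increase $f$; hence we may assume $F_i=\langle F_i,P_i\rangle P_i$, i.e.\ $\sum_{j\ne i}\frac{P_j}{|P_i-P_j|}=\mu_iP_i$ with $\mu_i:=\sum_{j\ne i}\bigl(\tfrac1{|P_i-P_j|}-\tfrac{|P_i-P_j|}{2}\bigr)$, for every above-equator point $P_i$. I would check that $\mu_i\ne 0$ (else $\sum_{j\ne i}\frac{P_j}{|P_i-P_j|}=0$, so $O\in\mathrm{conv}\{P_j:j\ne i\}\subseteq\mathrm{conv}(A,B,C,D)$) and that $\mu_i>0$ (if $\mu_i<0$ then $-P_i$ is a strictly positive combination of points with nonnegative $z$-coordinate, forcing $z_i\le 0$); hence each above-equator point is a strictly positive combination of the other three. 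A short case analysis then finishes it. If exactly one point is above the equator, it is a positive combination of three equatorial points and therefore has $z$-coordinate $0$, impossible. If exactly two, say $P_2,P_3$, are above, substituting $P_3=w_{30}P_0+w_{31}P_1+w_{32}P_2$ into $P_2=w_{20}P_0+w_{21}P_1+w_{23}P_3$ gives $(1-w_{23}w_{32})P_2=(w_{20}+w_{23}w_{30})P_0+(w_{21}+w_{23}w_{31})P_1$ with $P_0,P_1$ equatorial, which according to the sign of $1-w_{23}w_{32}$ forces $z_2=0$, or $-z_2\ge 0$, or $O\in\mathrm{conv}\{P_0,P_1\}$, all impossible. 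If all three of $P_1,P_2,P_3$ are above, the relations read $(\mathbf I-W')\mathbf P=\mathbf c\,P_0$ with $\mathbf P=(P_1,P_2,P_3)^{T}$, $\mathbf c=(w_{10},w_{20},w_{30})^{T}>0$, and $W'$ a $3\times3$ matrix with zero diagonal and strictly positive off-diagonal entries: if $\mathbf I-W'$ is invertible each $P_i$ is a scalar multiple of $P_0$, so $P_i=\pm P_0$, impossible; if it is singular then $1$ must be the Perron eigenvalue of $W'$ (as $W'$ has zero trace and its Perron root strictly dominates the other two eigenvalues), and applying its strictly positive left eigenvector $\mathbf u$ gives $0=\mathbf u^{T}(\mathbf I-W')\mathbf P=(\mathbf u^{T}\mathbf c)P_0\ne 0$, impossible.

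Hence $O\in\mathrm{conv}(A,B,C,D)$. As $O$ has $z$-coordinate $0$ while every vertex has $z\ge 0$, any convex representation $O=\sum_i t_iP_i$ uses only equatorial vertices, so $O$ lies in the convex hull of those of $A,B,C,D$ that are on the equator — exactly the assertion that $O$ lies in the interior of the tetrahedron $ABCD$ or on one of its surfaces (and, in particular, at least three of the four points lie on the equator). I expect the mixed case — the sub-cases with two or three points above the equator — to be the only genuinely delicate point: the obvious remedy of pushing the high points down while fixing the equatorial ones need not increase $f$ (a high point very close to a low one contributes a large inward force), so one really has to exploit the positive-combination structure rather than a single clean perturbation.
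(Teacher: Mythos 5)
Your argument is correct, but it takes a genuinely different route from the paper's. The paper proves the lemma by a geometric comparison: if $O$ lies outside the tetrahedron, it reduces to the case where $O$ and $D$ are separated by the plane $ABC$; when the circumradius $r$ of $ABC$ is less than $1$, lifting $D$ orthogonally onto the sphere of radius $r$ through $A,B,C$ strictly increases $AD+BD+CD$, and the lifted configuration lies on a closed hemisphere of radius $r$, where by similarity the distance sum is at most $r$ times the unit-hemisphere maximum, i.e.\ at most $r$ times the original (maximal) sum --- impossible since $r<1$; a parallel argument on the face $BCD$ handles $r=1$. You instead run a first-order variational argument: the perturbation fields $\dot P_i=z_iP_i-e_z$ and $\dot P_i=d_iP_i-\mathbf n$, combined with $\sum_iF_i=0$ and $\langle F_i,P_i\rangle>0$, dispose of the all-above and all-equatorial cases, and in the mixed case the interior stationarity condition $\sum_{j\ne i}P_j/|P_iP_j|=\mu_iP_i$ with $\mu_i>0$ writes each above-equator point as a strictly positive combination of the other three, which you then kill by elimination and, in the three-above subcase, a neat zero-trace Perron--Frobenius argument. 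Both establish the lemma as stated. The paper's proof needs no differentiability discussion and in fact delivers the stronger conclusion used in the introduction (up to permutation, $A,B,C$ lie on the equator and $O$ lies in the closed triangle $ABC$); your proof is more mechanical and local, at the cost of the case analysis on stationarity.

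Two points to tighten. First, your parenthetical ``in particular, at least three of the four points lie on the equator'' does not follow from $O\in\mathrm{conv}(A,B,C,D)$ alone: a priori $O$ could be the midpoint of two antipodal equatorial points with the remaining two points strictly above the equator, and your first-order analysis (which assumed $O\notin\mathrm{conv}$) does not exclude this; if you need that consequence, it requires a separate argument. Second, the assertion that a configuration with coincident points has $f<3\sqrt3+3\sqrt2$ should be justified; one line suffices: if two points coincide then $f=2(XZ+XW)+ZW\le\sqrt{4+4+1}\,\sqrt{XZ^2+XW^2+ZW^2}\le 3\cdot 3=9<3\sqrt3+3\sqrt2$, using $\sum_{i<j}|P_iP_j|^2\le 9$ for three points on the unit sphere.
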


\begin{proof}
It is easy to see that the convex hull of any four points on a sphere is either a tetrahedron or a planar (convex) quadrilateral,  
and in the latter case,  if the sum of distances between them is maximal,  then they are all on the equator 
and $O$ is contained in the interior of the quadrilateral. 

Now we assume that the convex of $A,  B,  C,  D$ is a tetrahedron and that the center of the sphere $O$ lies in the outside of $ABCD$,  
then,  without loss of generality,  we can assume further 
that $D$ and $O$ lie at different sides of the plane $ABC$.
Let $r$ be the radius of the circumcircle of the triangle $ABC$,  $O'$ the center of the circle. 

If $r<1$,  then the plane determined by $A, B, C$ cuts the unit sphere 
into two spherical caps,  whereas $D$ lies on the smaller one (denoted by $K_D$),  
and $O$ lies in the interior of the convex hull of the larger one (denoted by $K_O$).
Construct the sphere ${S'}^2$ with center at $O'$ and radius $r$. 
Then $A,  B, C$ are on a great circle on ${S'}^2$ and the plane $ABC$ cuts ${S'}^2$ into two hemispheres.  
Let ${S'}^2_{\geq 0}$ be the hemisphere of ${S'}^2$ which lies in the same side of the plane $ABC$ with $D$.
Then,  $K_D$ is contained in the interior of ${S'}^2_{\geq 0}$ and 
$D$ is contained in the interior of the convex hull of ${S'}^2_{\geq 0}$. 
Let $D'$ be the point on ${S'}^2_{\geq 0}$ so that $D'D\perp ABC$. 
Then it is easy to see that 
$$
AD<AD',  BD<BD',  CD<CD',  
$$
and therefore,  
\begin{align*}
&AB+BC+CA+AD+BD+CD
< AB+BC+CA+AD'+BD'+CD'
\\
\leq& \max_{P'\in {S'}^{2}_{\geq 0}} (AB+BC+CA+AP'+BP'+CP')
\\
\leq &\max_{A_1, B_1, C_1, D_1\in {S'}^{2}_{\geq 0} }(A_1B_1+B_1C_1+C_1A_1+A_1D_1+B_1D_1+C_1D_2)
\\
=&\;r\cdot(AB+AC+AD+BC+CD+DB), 
\end{align*}  
which is impossible since $r<1$. 

If $r=1$ and $O$ is neither in the interior of $ABCD$ and nor in the triangle $ABC$ and its edges,  then 
$A, B, C$ are on the equator of $S^2_{\geq 0}$. 
Without loss of generality,  
we may assume that $A, O$ lie on different sides of line $BC$. Let $r_2$ be the radius of the circumcircle of the triangle $BCD$,  
$O''$ the center of the circle. Construct the sphere $S''^{2}$ with center at $O''$ and radius $r_2<1$. Then the plane $BCD$
cuts $S''^{2}$ into two hemispheres. Let $S''^{2}_{\geq 0}$ be the hemisphere that lies on the same side of the plane $BCD$ with $A$. 
Then $A$ is contained in the interior of $S''^{2}_{\geq 0}$. Let $A'\in S''^{2}_{\geq 0}$ be the point satisfying $A'A\perp BCD$. 
Then we will have
\begin{align*}
&AB+BC+CA+AD+BD+CD<A'B+BC+CA'+A'D+BD+CD
\\
\leq& r_2(AB+BC+CA+AD+BD+CD), 
\end{align*}
which is also impossible since $r_2<1$. 

Finally we proved that if $A,  B,  C,  D\in S^2_{\geq 0}$ form an optimal solution of Problem~\eqref{four-point-problem},  then 
$A,  B,  C$ are on the equator of the hemisphere,  and $O$ is in the interior or edges of $ABC$,  up to a permutation of the four points.   
\end{proof}

According to Lemma~\ref{lemma0},  we may assume that $A=(x_0, y_0, 0)=(0, -1,  0)$,  $B=(x_1, y_1, 0), C=(x_2, y_2, 0)\in S^1$ and $D=(x_3, y_3, z_3)\in S^2_{\geq 0}$
and search the maximal sum of the distances between the four points. 
It is clear that for any two points $P=(x, y, z),  Q=(x', y', z')\in S^2$ with $z\cdot z'=0$,  we have
$$
PQ=d(P, Q)=\sqrt{(x-x')^2+(y-y')^2+(z-z')^2}=\sqrt{2\cdot (1-x\, x'-y\,  y')}.
$$
Define the ``{\it warp distance\/}'' between two points $P_1=(x, y), Q_1=(x', y')$
in the unit disk $D^2$ as follows:
$$
\delta(P_1, Q_1):=\sqrt{2-2xx'-2yy'}.
$$
Let 
\begin{equation}\label{z-projection}
(\, )_1:{\Bbb R}^3\rightarrow {\Bbb R}^2,  \quad (x, y, z)\mapsto (x, y), 
\end{equation}
be the $z$-projection. Then for $A, B, C$ on the equator and $D$ on the hemisphere $S^2_{\geq 0}$,  we have
$$
d(P, Q)=\delta(P_1, Q_1)
$$
for $P, Q\in \{A, B, C, D\}$. 
Thus,  we can transform Problem~\eqref{four-point-problem} into the following optimization problem on $S^1\times S^1\times D^2$:
\begin{eqnarray}\label{warp-distance-programming}
\nonumber
\max&&\delta(A_1, B_1)+\delta(B_1, C_1)+\delta(C_1, A_1)+\delta(A_1, D_1)+\delta(B_1, D_1)+\delta(C_1, D_1), \\
\nonumber\mbox{s.t.}&&A_1=(0, -1), B_1=(x_1, y_1), C_1=(x_2, y_2)\in S^1:=\{(x, y)|x^2+y^2=1\}, 
\\
&&D_1=(x_3, y_3)\in D^2=\{(x, y)\;|\;x^2+y^2\leq 1\}.
\end{eqnarray}

It is clear that $\{A, B, C, D\}$ is an optimal configuration of Problem~\eqref{four-point-problem} 
if and only if $\{A_1, B_1, C_1, D_1\}$ is an optimal configuration of Problem~\eqref{warp-distance-programming}.  
We will prove the following result:

\begin{theorem}\label{morse-lemma}
If $A_1=(0,  -1)$ and $B_1=(x_1,  y_1), C_1=(x_2,  y_2), D_1=(x_3,  y_3)\in [-1,  1]\times [-1,  1]$
satisfy the following conditions:
\begin{align*}
(\mbox{\rm i})&\phantom{x} x_1^2+y_1^2=1, \; |x_1-1|\leq 1/25, \; |y_1|\leq 7/25, \\
(\mbox{\rm ii})&\phantom{x} x_2^2+y_2^2=1, \; |x_2+1|\leq 1/25, \; |y_2|\leq 7/25, \\
(\mbox{\rm iii})&\phantom{x} |x_3|\leq 7/25, \; |y_3-1|\leq 1/25, \\
(\mbox{\rm iv})&\phantom{x} \mbox{the triangle $ABC$ contains $O=(0, 0)$ in its inside or edges}, 
\end{align*}
then
$$
\delta(A_1, B_1)\!+\!\delta(B_1, C_1)\!+\!\delta(C_1, D_1)\!+\!\delta(A_1, D_1)\!+\!\delta(B_1, D_1)\!+\!\delta(C_1, D_1)\leq 
4\!+\!4\sqrt{2}, 
$$
and the equality holds if and only if $B_1=(1, 0), C_1=(-1, 0), D_1=(0, 1)$.
\end{theorem}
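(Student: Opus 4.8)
The plan is to parametrize the four points by the stereographic-type coordinates $(s,t,u,v)$ indicated in the introduction, so that $B=\bigl(\tfrac{1-s^2}{1+s^2},\tfrac{2s}{1+s^2},0\bigr)$, $C=\bigl(-\tfrac{1-t^2}{1+t^2},\tfrac{2t}{1+t^2},0\bigr)$, $D=\bigl(\tfrac{2u}{1+u^2},\tfrac{1-v^2}{1+v^2},z_3\bigr)$, with $z_3\ge 0$ determined (up to sign) by $x_3^2+y_3^2+z_3^2=1$. The critical configuration $B_1=(1,0)$, $C_1=(-1,0)$, $D_1=(0,1)$ corresponds to $s=t=u=v=0$, and the hypotheses (i)–(iii) translate into explicit small boxes for $(s,t,u,v)$ around the origin. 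The first step is to write each of the six warp distances as $\sqrt{2-2\,(\text{rational in }s,t,u,v)}$, clear denominators, and expand $f-(4+4\sqrt2)$ as a power series at the origin; one checks the gradient vanishes there (this is the ``verify critical point'' step promised in the introduction) and that the Hessian form equals $X\cdot H\cdot X^{T}/Q(s,t,u,v)$ with $Q>0$ a product of the denominators $1+s^2$, $1+t^2$, $1+u^2+v^2$ and $H$ a concrete negative semidefinite $4\times4$ matrix. The kernel of $H$ (if nontrivial) should correspond exactly to the directions tangent to the one-parameter rotation family, which is why we froze $A=(0,-1)$.

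Next I would control the higher-order terms. Writing $f-(4+4\sqrt2)=\tfrac{1}{Q}\bigl(X H X^{T}+R(s,t,u,v)\bigr)$ where $R$ collects the terms of degree $\ge 3$, I want a symmetric matrix $M$, built mechanically from coefficient bounds on $R$ over the box, such that $-X M X^{T}\le R\le X M X^{T}$ holds for all $(s,t,u,v)$ in the box; this is the standard device (cf.\ the reference to \cite{zeng-yang11}) of dominating a polynomial with no constant or linear part by a quadratic form whose coefficients absorb, term by term, the monomial bounds $|s|^a|t|^b|u|^c|v|^d\le(\text{box radius})^{a+b+c+d-2}\cdot(\text{quadratic monomial})$. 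It then suffices to show $H+\rho M\preceq 0$ for all $\rho\in[-1,1]$, and since $H+\rho M$ is linear in $\rho$, by convexity of the maximal eigenvalue it is enough to check the two endpoints $H+M\preceq 0$ and $H-M\preceq 0$. Each is a fixed rational symmetric $4\times4$ matrix, whose negative semidefiniteness is decided exactly by signs of the leading principal minors (or an $LDL^{T}$ factorization) over $\mathbb{Q}$.

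I would handle the equality case separately: on the box, $f=(4+4\sqrt2)$ forces $XHX^{T}=0$ and $R=0$; the first forces $X$ into $\ker H$, i.e.\ into the rotation direction, but the constraint $A_1=(0,-1)$ together with condition (iv) pins the rotation angle to $0$, giving $s=t=u=v=0$, hence $B_1=(1,0)$, $C_1=(-1,0)$, $D_1=(0,1)$. Note hypothesis (iv), that $O$ lies in triangle $ABC$, is used only here (and to rule out the reflected component $z_3<0$ not actually mattering since $z_3$ enters $f$ only through $x_3^2+y_3^2$); in the strict inequality it is not needed, which is why the statement is genuinely about the box and $\delta$ alone.

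The main obstacle I anticipate is the construction and verification of $M$: the series $R$ has many monomials (the denominators $1+u^2+v^2$ make the expansion of the $D$-distances bulky), and a naive term-by-term bound may produce an $M$ too large for $H\pm M$ to stay negative semidefinite on a box as generous as $|s|,|u|,|v|\le 7/25$, $|t|\le 7/25$. So the real work is choosing the box radii (the paper takes $1/25$ and $7/25$, slightly smaller than the $\delta_1=1/32$, $\delta_2=1/4$ of Theorem~\ref{th1}, giving a little slack) and organizing the monomial estimates cleverly enough — e.g.\ grouping $R$ by which pair of points a term comes from, and bounding the "cross" blocks more tightly than the diagonal ones — so that the two endpoint matrices $H\pm M$ still have strictly negative definite restriction to the complement of $\ker H$. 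Verifying that last fact is a finite exact computation, but getting $M$ small enough to make it come out is the delicate part; everything else is bookkeeping in $\mathbb{Q}[s,t,u,v]$.
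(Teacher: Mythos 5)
Your plan follows the strategy the paper announces in its introduction (quadratic Hessian part plus a mechanically constructed dominating matrix $M$ with $H\pm\rho M\preceq 0$), but as written it has a concrete gap at exactly the step you flag as delicate, and the paper's actual proof resolves it by a different device. After clearing denominators, $f-(4+4\sqrt2)$ is \emph{not} a polynomial: each warp distance is a square root of a rational function, so your ``$R$ collects the terms of degree $\ge 3$'' is an infinite power series, and a term-by-term monomial bound cannot be carried out mechanically without a rigorous tail estimate. The paper sidesteps this by its Lemma~\ref{taylor}: the one-sided bound $\sqrt{1-x}\le 1-\tfrac12 x-\tfrac18 x^2-\tfrac1{16}x^3$, valid on all of $[-1,1]$, is applied to each of the six distances, which converts the problem into proving $J(s,t,u,v)\le 0$ for an explicit polynomial $J$ of degree $24$ (1288 monomials). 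Only then does the finite bookkeeping you describe become possible: degrees $\ge 5$ are dominated by quadratic forms via Lemma~\ref{inequality-for-morse-lemma} (after a sign-handling transformation that discards negative even monomials), the degrees $3$ and $4$ and the variable $v$ are treated specially (the coefficient $K_2(s,t,u)$ of $v^2$ is shown negative on the cube and the $-18v^4$ term is dropped), and the conclusion is a single $3\times 3$ negative-semidefiniteness check in $(s,t,u)$ rather than your family $H+\rho M$, $\rho\in[-1,1]$. Without something playing the role of Lemma~\ref{taylor}, your construction of $M$ is not defined, so the proposal is a plan rather than a proof.

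Two further points of disagreement with the paper. First, your claim that hypothesis (iv) is needed only for the equality case is unsupported: in the parametrization (iv) becomes $s+t\ge 0$, and the paper's proof uses it in the strict inequality, e.g.\ to discard the cubic term $-12\sqrt2\,(s+t)u^2$ in $k_{30}$; whether the bound survives on the full box without $s+t\ge 0$ is not established by you or by the paper. Second, your picture of $\ker H$ as the tangent direction of the rotation family is moot: with $A$ pinned at $(0,-1)$ the degree-two part $H_2$ is negative \emph{definite} (this is what makes any domination of higher-order terms feasible at all), and the equality case in the paper instead falls out of the strictness of Lemma~\ref{taylor} away from $x=0$, which forces $s=t=u=v=0$.
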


This result is stronger than Theorem~\ref{th2},  since if $A=(0,  -1,  0)$,  and
$$
B\in [\frac{31}{32}, 1]\times [-\frac{1}{4}, \frac{1}{4}],  C\in [-1, -\frac{31}{32}]\times [-\frac{1}{4}, \frac{1}{4}], 
D\in [-\frac{1}{4}, \frac{1}{4}]\times [\frac{31}{32}, 1]\times [0, \frac{1}{4}], 
$$
then their $z$-projections $A_1, B_1, C_1, D_1$ satisfy the conditions (i),  (ii) and (iii) in Theorem~\ref{morse-lemma}.
To prove Theorem~\ref{morse-lemma} we first setup a parametric representation for the coordinates of points $B_1, C_1, D_1$
in Problem~\eqref{warp-distance-programming}.
Notice that the property $O\in ABC$ in Lemma~\ref{lemma0} implies that $x_1x_2<0$ so we may assume that 
$$
x_1=\frac{1-s^2}{1+s^2},  \; y_1=\frac{2s}{1+s^2},  \quad
x_2=-\frac{1-t^2}{1+t^2},  \; y_2=\frac{2t}{1+t^2}, 
$$
for $-1\leq s, t\leq 1$,  and since the optimal configuration $A, B, C, D$ satisfies 
$$
AD+BD+CD\geq 4+4\sqrt{2}-(AB+BC+CA)\geq 4+4\sqrt{2}-3\sqrt{3}>3\sqrt{2}, 
$$
so without loss of generality, we may take $A=(0, -1, 0)$ such that 
$$
AD=\sqrt{2+2y_3}=\max\{AD, BD, CD\}>\sqrt{2}, 
$$ 
and therefore $y_3>0$,  and we assumed that
$$
x_3=\frac{2u}{1+u^2}, \;y_3=\frac{1-v^2}{1+v^2}, 
$$ 
for $-1\leq u\leq 1$ and $0\leq u\leq 1$. 
Furthermore,  the condition
$(x_3, y_3)\in D^2$ together with $v\geq 0$ implies that $-v\leq u\leq v$.
Thus,  the constraint conditions
of Problem~\eqref{warp-distance-programming} can be expressed as
\begin{align*}
&B_1=(\frac{1-s^2}{1+s^2}, \frac{2s}{1+s^2}), C_1=(-\frac{1-t^2}{1+t^2}, \frac{2t}{1+t^2}),  
D_1=(\frac{2u}{1+u^2}, \;\frac{1-v^2}{1+v^2}), 
\\
&-1\leq s, t, u\leq 1,  0\leq v\leq 1,  -v\leq u\leq v.
\end{align*}


We also need the following two lemmas in the proof of Theorem~\ref{morse-lemma}.
\begin{lemma}\label{taylor}
For $-1\leq x\leq 1$, 
$$
\sqrt{1-x}\leq 
1-\frac{1}{2}\, x-\frac{1}{8}\, x^2-\frac{1}{16}\, x^3.
$$ 
\end{lemma}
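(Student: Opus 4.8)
The plan is to reduce the claimed inequality to an elementary polynomial nonnegativity statement via squaring. Write
$P(x) := 1 - \tfrac12 x - \tfrac18 x^2 - \tfrac1{16} x^3$
for the right-hand side; the goal is to show $\sqrt{1-x}\le P(x)$ on $[-1,1]$.

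First I would verify that both sides are nonnegative on $[-1,1]$, so that squaring yields an equivalence rather than merely a one-way implication. On this interval $1-x\ge 0$ is immediate. For $P$, note that $P'(x) = -\tfrac12 - \tfrac14 x - \tfrac3{16}x^2$ has negative leading coefficient and negative discriminant, hence $P'(x)<0$ for all $x$; therefore $P$ is strictly decreasing on $[-1,1]$ and $P(x)\ge P(1) = 5/16 > 0$ there.

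The heart of the argument is then the algebraic identity
\[
P(x)^2 - (1-x) \;=\; \frac{x^4\,(x^2 + 4x + 20)}{256},
\]
which one checks by a direct expansion: multiplying through by $256$ turns $16P(x) = 16 - 8x - 2x^2 - x^3$ into a perfect square whose coefficients of $x^2$ and $x^3$ cancel, leaving $256(1-x) + x^4(x^2+4x+20)$. Since $x^2 + 4x + 20 = (x+2)^2 + 16 > 0$ for every real $x$ and $x^4\ge 0$, the right-hand side is nonnegative, vanishing only at $x=0$. Hence $P(x)^2 \ge 1-x \ge 0$ on $[-1,1]$, and taking square roots — legitimate because $P(x)\ge 0$ — gives $P(x)\ge \sqrt{1-x}$, which is the assertion.

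The only real obstacle is the bookkeeping in the expansion that establishes the identity; beyond that, nothing more than the sign of a sum of squares is used. The shape of the identity is in fact forced: $P$ is the degree-$3$ Taylor polynomial of $\sqrt{1-x}$ at $0$, so $P(x)^2-(1-x)$ must vanish to order $4$ at the origin, which accounts for the factor $x^4$ and makes the remaining quadratic factor easy to control.
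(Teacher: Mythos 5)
Your proof is correct and follows essentially the same route as the paper: both rest on the identity $P(x)^2-(1-x)=\frac{x^4(x^2+4x+20)}{256}$ with $x^2+4x+20=(x+2)^2+16>0$. The only difference is that you also verify $P(x)\geq 0$ on $[-1,1]$ (via $P'<0$ and $P(1)=5/16$) before taking square roots, a step the paper leaves implicit, so your write-up is if anything slightly more complete.
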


\begin{proof}
$$
\left(1-\frac{1}{2}\, x-\frac{1}{8}\, x^2-\frac{1}{16}\, x^3\right)^2
-(1-x)
={\frac {{x}^{4} \left( {x}^{2}+4\, x+20 \right) }{256}}>0.
$$
\end{proof}

\begin{lemma}\label{inequality-for-morse-lemma}
If $x_1, x_2, \cdots, x_n$ are positive real numbers such that $x_1, x_2, \cdots, x_n\leq r$,  then
\begin{equation}\label{ifml}
x_1^{d_1}x_2^{d_2}\cdots x_n^{d_n}\leq 
\frac{r^{N-2}}{N}\left(x_1^2\cdot d_1+x_2^2\cdot d_2+\cdots+x_n^2\cdot d_n\right), 
\end{equation}
where $N=d_1+d_2+\cdots+d_n$.
\end{lemma}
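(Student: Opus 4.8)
The plan is to reduce the claimed inequality to the weighted AM--GM inequality. First I would rewrite the left-hand side so that the exponents become weights summing to $1$: since $N=d_1+\cdots+d_n$, we have
$$
x_1^{d_1}x_2^{d_2}\cdots x_n^{d_n}
=\left(x_1^{2}\right)^{d_1/N}\left(x_2^{2}\right)^{d_2/N}\cdots\left(x_n^{2}\right)^{d_n/N}\cdot
x_1^{d_1}x_2^{d_2}\cdots x_n^{d_n}\Big/\prod_i \left(x_i^{2}\right)^{d_i/N}.
$$
A cleaner route is to split off $N-2$ factors to absorb the $r^{N-2}$: write $x_1^{d_1}\cdots x_n^{d_n}$ as a product of exactly $N$ of the $x_i$'s (with multiplicity $d_i$), bound $N-2$ of them by $r$, and then apply the two-variable AM--GM $x_ix_j\le \tfrac12(x_i^2+x_j^2)$ to the remaining pair. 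The difficulty with that naive pairing is that it does not obviously produce the exact symmetric right-hand side $\frac{r^{N-2}}{N}\sum_i x_i^2 d_i$; one must choose the pair and the bookkeeping carefully.

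The robust way I would actually carry out is the following. Apply the weighted AM--GM inequality to the $N$ numbers consisting of $d_i$ copies of $x_i^2$, each with weight $1/N$:
$$
\prod_{i=1}^n \left(x_i^{2}\right)^{d_i/N}
\le \frac{1}{N}\sum_{i=1}^n d_i\, x_i^{2}.
$$
Taking square roots of the left-hand side gives $\prod_i x_i^{d_i}\Big/\left(\frac1N\sum_i d_i x_i^2\right)^{?}$... more directly: the displayed inequality reads
$$
\left(x_1^{d_1}x_2^{d_2}\cdots x_n^{d_n}\right)^{2/N}
\le \frac{1}{N}\left(d_1 x_1^{2}+\cdots+d_n x_n^{2}\right).
$$
Now raise both sides to the power $N/2$ and use $x_i\le r$ on the left to dominate $N-2$ of the $2N/2=N$... rather, write $\left(x_1^{d_1}\cdots x_n^{d_n}\right)=\left(x_1^{d_1}\cdots x_n^{d_n}\right)^{2/N}\cdot\left(x_1^{d_1}\cdots x_n^{d_n}\right)^{(N-2)/N}$, bound the second factor by $r^{N-2}$ (since it is a product of $N-2$ numbers each $\le r$, in the sense of the exponent sum), and bound the first factor by $\frac1N\sum_i d_i x_i^2$ via the line above. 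Multiplying the two bounds yields exactly
$$
x_1^{d_1}x_2^{d_2}\cdots x_n^{d_n}\le \frac{r^{N-2}}{N}\left(d_1 x_1^{2}+\cdots+d_n x_n^{2}\right),
$$
which is \eqref{ifml}.

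I expect the main obstacle to be purely one of presentation rather than of mathematical content: one must justify cleanly that $\left(\prod_i x_i^{d_i}\right)^{(N-2)/N}\le r^{N-2}$ — this is immediate from $0<x_i\le r$ and the fact that the total exponent is $\sum_i d_i\cdot\frac{N-2}{N}=N-2$ — and that the weighted AM--GM is being applied with a legitimate weight system (the $N$ weights $\tfrac1N$, repeated with multiplicities $d_i$, sum to $1$). Both are standard, so the proof is short; the only thing to watch is that the $d_i$ are taken to be nonnegative integers (or at least nonnegative reals) so that "$d_i$ copies" and the exponent arithmetic make sense, and that equality in AM--GM characterizes the equality case (all $x_i$ equal and equal to $r$) should that be needed downstream.
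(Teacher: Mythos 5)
Your proof is correct, but it takes a different route from the paper. You invoke the weighted AM--GM inequality with weights $d_i/N$ applied to the numbers $x_i^2$, giving $\bigl(x_1^{d_1}\cdots x_n^{d_n}\bigr)^{2/N}\le \frac1N\sum_i d_i x_i^2$, and then split off the complementary factor $\bigl(x_1^{d_1}\cdots x_n^{d_n}\bigr)^{(N-2)/N}\le r^{N-2}$; both steps are sound, and their product is exactly \eqref{ifml}. The paper instead carries out the very ``naive pairing'' you set aside at the start, but repaired by symmetrization: listing the factors as $y_1,\dots,y_N$ (each $x_i$ with multiplicity $d_i$), it bounds the product by the average over all $\binom N2$ choices of which pair to keep, $y_1\cdots y_N\le \frac{r^{N-2}}{N(N-1)}\sum_{i<j}2y_iy_j$, then expands $\sum_{i<j}2y_iy_j=(\sum y_i)^2-\sum y_i^2$ and applies only the two-variable bound $2x_ix_j\le x_i^2+x_j^2$ to the cross terms, which after bookkeeping collapses to $\frac{r^{N-2}}{N}\sum_k d_kx_k^2$. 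Your argument buys brevity and transparency at the price of citing the general weighted AM--GM (with rational weights); the paper's is longer but self-contained, using nothing beyond $2ab\le a^2+b^2$. One small caveat common to both: the statement and both proofs implicitly require $N\ge 2$ (for $N=1$ the inequality is false unless $x_1=r$, and the paper's divisor $N-1$ and your exponent $(N-2)/N$ both presuppose $N\ge2$); this is harmless since the lemma is only applied with $N\ge 3$. Your closing remark on the equality case is not needed downstream, so nothing further is required there.
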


\begin{proof} 
Let $y_1, y_2, \cdots, y_N$ be a permutation of the following $N=d_1+d_2+\cdots+d_N$ numbers
$$
\overbrace{x_1, \cdots, x_1}^{d_1}, \;
\overbrace{x_2, \cdots, x_2}^{d_2}, \;
\cdots, \;
\overbrace{x_n, \cdots, x_n}^{d_n}, 
$$
then $0<y_1, y_2, \cdots, y_N\leq r$ and 
\begin{align*}
&\phantom{=}x_1^{d_1}x_2^{d_2}\cdots x_n^{d_n}
=y_1y_2\cdots y_N
\leq\frac{1}{N(N-1)}
\sum_{1\leq i<j<N} 2y_iy_j\cdot r^{N-2}
\\
&=\frac{r^{N-2}}{N(N-1)}
\left(
(y_1+y_2+\cdots+y_N)^2-(y_1^2+y_2^2+\cdots+y_N^2)
\right)
\\
&=\frac{r^{N-2}}{N(N-1)}
\left(
(d_1x_1+d_2x_2+d_nx_n)^2-(d_1x_1^2+d_2x_2^2+\cdots+d_nx_n^2)
\right)
\\
&=\frac{r^{N-2}}{N(N-1)}
\left(
\sum_{k=1}^{n}(d_k^2-d_k)x_k^2\;
+\;2\sum_{1\leq i<j\leq n}d_id_j\cdot x_ix_j 
\right)
\\
&\leq\frac{r^{N-2}}{N(N-1)}
\left(
\sum_{k=1}^{n}(d_k^2-d_k)x_k^2\;
+\sum_{1\leq i<j\leq n} d_id_j(x_i^2+x_j^2)
\right).
\end{align*}
Now we compute 
$$
S=\sum_{1\leq i<j\leq n} d_id_j(x_i^2+x_j^2)
$$
as follows.
\begin{align*}
2S
&
=\sum_{1\leq i, j\leq n} d_id_j(x_i^2+x_j^2)
-\sum_{k=1}^{n} d_k^2(x_k^2+x_k^2)
\\
&=\sum_{i=1}^n\Big[d_ix_i^2(\sum_{j=1}^nd_j)\Big]+\sum_{j=1}^n\Big[(\sum_{i=1}^nd_i)d_jx_j^2\Big]
-2\sum_{k=1}^{n} \big[d_k^2x_k^2\big]
\\
&=2N \sum_{i=1}^nd_ix_i^2-2\sum_{k=1}^{n} d_k^2x_k^2.
\end{align*} 
Therefore, 
$$
x_1^{d_1}x_2^{d_2}\cdots x_n^{d_n}
\leq
\frac{r^{N-2}}{N(N\!-\!1)}\!
\cdot\! {(N\!-\!1)}\sum_{k=1}^{n}d_kx_k^2
=\frac{r^{N-2}}{N}(d_1x_1^2+d_2x_2^2+\cdots+d_nx_n^2), 
$$
as claimed. 
\end{proof}


\section{Automated Local Critical Analysis}
\label{how-large-is-small-neighbourhood}

Now we give the proof of Theorem~\ref{morse-lemma}.

\begin{proof} For simplicity we drop the subscripts of $A_1, B_1, C_1, D_1$. Assume that
$$
B=(x_1, y_1)=(\frac{1-s^2}{1+s^2}, \frac{2s}{1+s^2}),  \;
C=(x_2, y_2)=(-\frac{1-t^2}{1+t^2}, \frac{2t}{1+t^2}), 
$$
and
$$
D=(x_3, y_3)=(\frac{2u}{1+u^2}, \frac{1-v^2}{1+v^2}).
$$
Then,  the conditions (i),  (ii),  (iii) in Theorem~\ref{morse-lemma}
can be transformed into 
$$
-1/7\leq s, t, u, v\leq 1/7, 
$$
and the condition (iv) that $O=(0, 0)$ is in the inside (or on the edges) of $ABC$ can be represented by 
that the oriented area of $ABC$ is positive,  
\begin{equation}\label{splustispositive}
\frac{1}{2}\, {\tt det}  \left[ \begin {array}{ccc} 
0&0&1\\ 
\noalign{\medskip}
{\frac {-{s}^{2}+1}{{s}^{2}+1}}&{\frac {2\, s}{{s}^{2}+1}}&1\\ 
\noalign{\medskip}
-{\frac {-{t}^{2}+1}{{t}^{2}+1}}&{\frac {2\, t}{{t}^{2}+1}}&1\end {array}
 \right] ={\frac { \left( s+t \right)  \left( 1-s\, t \right) }{
 \left( {t}^{2}+1 \right)  \left( {s}^{2}+1 \right) }}
\geq 0, \;
\mbox{ i.e.,  }\; s+t\geq 0. 
\end{equation}

\noindent
Applying Lemma~\ref{taylor},  we have
\begin{align}
\label{dw12}
\delta(A, B)&=\sqrt {2+{\frac {4\, s}{{s}^{2}+1}}}
\leq
\sqrt {2}+{\frac {\sqrt {2}s}{{s}^{2}+1}}-{\frac {\sqrt {2}{s}^{2
}}{ 2\, \left( {s}^{2}+1 \right) ^{2}}}+{\frac {\sqrt {2}{s}^{3}}{
2\,  \left( {s}^{2}+1 \right) ^{3}}}, 
\\[3pt]
\label{dw13}
\delta(A, C)&=\sqrt {2+{\frac {4\, t}{{t}^{2}+1}}}
\leq
\sqrt {2}+{\frac {\sqrt {2}t}{{t}^{2}+1}}-{\frac {\sqrt {2}{t}^{2
}}{ 2\, \left( {t}^{2}+1 \right) ^{2}}}+{\frac {\sqrt {2}{t}^{3}}{
 2\, \left( {t}^{2}+1 \right) ^{3}}}, 
\\[3pt]
\label{dw14}
\delta(A, D)&=\sqrt {2+2\, {\frac {-{v}^{2}+1}{{v}^{2}+1}}}
\leq
2-{\frac {{v}^{2}}{{v}^{2}+1}}-{\frac {{v}^{4}}{ 4\,  \left( {v}^{2}+1
 \right) ^{2}}}-{\frac {{v}^{6}}{ 8\, \left( {v}^{2}+1 \right) ^{3}}}, 
\\[5pt]
\delta(B, C)&=
\sqrt {2+2\, {\frac { \left( -{t}^{2}+1 \right)  \left( -{s}^{2}+1
 \right) }{ \left( {t}^{2}+1 \right)  \left( {s}^{2}+1 \right) }}-{
\frac {8\, st}{ \left( {t}^{2}+1 \right)  \left( {s}^{2}+1 \right) }}}
\nonumber\\[3pt]
&=
2\, \sqrt {1-{\frac {{s}^{2}+{t}^{2}}{ \left( {t}^{2}+1 \right) 
 \left( {s}^{2}+1 \right) }}-{\frac {2\, st}{ \left( {t}^{2}+1 \right) 
 \left( {s}^{2}+1 \right) }}}
\nonumber\\[3pt]
&\label{dw23}
\leq
2\!-\!{\frac {{s}^{2}+2\, st+{t}^{2}}{ \left( {t}^{2}+1 \right)  \left( {s}
^{2}+1 \right) }}\!-\!{\frac { \left( {s}^{2}+2\, st+{t}^{2} \right) ^
{2}}{ 4\, \left( {t}^{2}+1 \right) ^{2} \left( {s}^{2}+1 \right) ^{2}}}\!-\!
{\frac { \left( {s}^{2}+2\, st+{t}^{2} \right) ^{3}}{ 8\, \left( {t}^{2}
+1 \right) ^{3} \left( {s}^{2}+1 \right) ^{3}}}, 
\end{align}
\begin{align}
\delta(D, B)
&=
\sqrt {2+4\, {\frac { \left( -{s}^{2}+1 \right) u}{ \left( {s}^{2}+1
 \right)  \left( {u}^{2}+1 \right) }}-4\, {\frac {s \left( -{v}^{2}+1
 \right) }{ \left( {s}^{2}+1 \right)  \left( {v}^{2}+1 \right) }}}
\nonumber
\\[3pt]
&\leq
\sqrt {2}-{\frac {\sqrt {2} \left( {s}^{2}u{v}^{2}-s{u}^{2}{v}^{2}+{s}
^{2}u+s{u}^{2}-s{v}^{2}-u{v}^{2}+s-u \right) }{ \left( {s}^{2}+1
 \right)  \left( {u}^{2}+1 \right)  \left( {v}^{2}+1 \right) }}
\nonumber\\[3pt]
&\phantom{\leq}
-1/2\, {
\frac {\sqrt {2} \left( {s}^{2}u{v}^{2}-s{u}^{2}{v}^{2}+{s}^{2}u+s{u}^
{2}-s{v}^{2}-u{v}^{2}+s-u \right) ^{2}}{ \left( {s}^{2}+1 \right) ^{2}
 \left( {u}^{2}+1 \right) ^{2} \left( {v}^{2}+1 \right) ^{2}}}
\nonumber\\[3pt]
&\phantom{\leq}
-1/2\, {
\frac {\sqrt {2} \left( {s}^{2}u{v}^{2}-s{u}^{2}{v}^{2}+{s}^{2}u+s{u}^
{2}-s{v}^{2}-u{v}^{2}+s-u \right) ^{3}}{ \left( {s}^{2}+1 \right) ^{3}
 \left( {u}^{2}+1 \right) ^{3} \left( {v}^{2}+1 \right) ^{3}}},  \label{dw34}
\end{align}
\begin{align}
\delta(C, D)
&=
\sqrt {2-4\, {\frac { \left( -{t}^{2}+1 \right) u}{ \left( {t}^{2}+1
 \right)  \left( {u}^{2}+1 \right) }}-4\, {\frac {t \left( -{v}^{2}+1
 \right) }{ \left( {t}^{2}+1 \right)  \left( {v}^{2}+1 \right) }}}
\nonumber\\[3pt]
&\phantom{\leq}
\leq
\sqrt {2}+{\frac {\sqrt {2} \left( {t}^{2}u{v}^{2}+t{u}^{2}{v}^{2}+{t}
^{2}u-t{u}^{2}+t{v}^{2}-u{v}^{2}-t-u \right) }{ \left( {t}^{2}+1
 \right)  \left( {u}^{2}+1 \right)  \left( {v}^{2}+1 \right) }}
\nonumber\\[3pt]
&\phantom{\leq}
-1/2\, {
\frac {\sqrt {2} \left( {t}^{2}u{v}^{2}+t{u}^{2}{v}^{2}+{t}^{2}u-t{u}^
{2}+t{v}^{2}-u{v}^{2}-t-u \right) ^{2}}{ \left( {t}^{2}+1 \right) ^{2}
 \left( {u}^{2}+1 \right) ^{2} \left( {v}^{2}+1 \right) ^{2}}}
\nonumber\\[3pt]
&\phantom{\leq}
+1/2\, {
\frac {\sqrt {2} \left( {t}^{2}u{v}^{2}+t{u}^{2}{v}^{2}+{t}^{2}u-t{u}^
{2}+t{v}^{2}-u{v}^{2}-t-u \right) ^{3}}{ \left( {t}^{2}+1 \right) ^{3}
 \left( {u}^{2}+1 \right) ^{3} \left( {v}^{2}+1 \right) ^{3}}}.\label{dw24}
\end{align}
Summer up the right sides of \eqref{dw12} to \eqref{dw24},  we have
\begin{align*}
&\delta(A, B)+\delta(A, C)+\delta(A, D)+\delta(B, C)+\delta(C, D)+\delta(D, B)
\\[3pt]
\leq& 
4+4\sqrt{2}
+
\frac{J(s, t, u, v)}
{8\,  \left( {t}^{2}+1 \right) ^{3} \left( {s}^{2}+1 \right) ^{3}
 \left( {v}^{2}+1 \right) ^{3} \left( {u}^{2}+1 \right) ^{3}}
, 
\end{align*}
where $J(s, t, u, v)$ is a polynomial of $s, t, u, v$ with $1288$ monomials,  and the lowerest and higher degree of the monomials are $2$ and $24$,  respectively. Let $H_j$ be the homogeneous terms of degree $j$. To prove Theorem~\ref{morse-lemma},  we need to verify that $J(s, t, u, v)\leq 0$ for all $s, t, u, v\in [-1/7, 1/7]$ with extra condition $s+t\geq 0$.  

The the expanded form of $H_2, H_3, H_4$ has $9, 20, 28$ terms,  respectively,  
\begin{align}
H_2=&
-8( \sqrt {2}+1) {s}^{2}-16st-8
 ( \sqrt {2}+1 ) {t}^{2}+8\sqrt {2}su-8\sqrt {2}tu-8\sqrt {2}{u}^{2
}-8{v}^{2}, 
\label{xph2}
\\
H_3=&
-4\, \sqrt {2} \left( su-tu+3\, {u}^{2}-4\, {v}^{2} \right)  \left( s+t
 \right), 
\label{xph3}
\\
H_4=&
-18{v}^{4}-8\sqrt {2}{s}^{4}-8\sqrt {2}{t}^{4}-8\sqrt {2}{u}^{
4}-48\sqrt {2}{s}^{2}{t}^{2}-32\sqrt {2}{s}^{2}{u}^{2}-8\sqrt {2
}{s}^{2}{v}^{2}
\nonumber
\\
&
+16\, \sqrt {2}s{u}^{3}-32\, \sqrt {2}{t}^{2}{u}^{2}-8\, 
\sqrt {2}{t}^{2}{v}^{2}-16\, \sqrt {2}t{u}^{3}-24\, \sqrt {2}{u}^{2}{v}^
{2}-44\, {s}^{2}{t}^{2}
\nonumber
\\
&
-24\, {s}^{2}{u}^{2}-48\, {s}^{2}{v}^{2}-24\, {t}^{
2}{u}^{2}-48\, {t}^{2}{v}^{2}-24\, {u}^{2}{v}^{2}-40\, {s}^{3}t-40\, s{t}^
{3}
\nonumber
\\
&
-48\, st{u}^{2}-48\, st{v}^{2}-24\, \sqrt {2}{s}^{2}tu+24\, \sqrt {2}s{
t}^{2}u+8\, \sqrt {2}su{v}^{2}
\nonumber
\\
&-8\, \sqrt {2}tu{v}^{2}-18\, {s}^{4}-18\, {t
}^{4}.
\label{xph4}
\end{align}
The number of monomials in $H_5, H_6,  \cdots,  H_{24}$ are listed as follows:
$$
20, 59, 44, 101, 70, 134, 88, 145, 90, 133, 74, 100, 50, 59, 26, 29, 10, 10, 2, 1, 
$$
and
$$
H_{23}=16\, \sqrt {2}{s}^{6}{t}^{5}{u}^{6}{v}^{6}+16\, \sqrt {2}{s}^{5}{t}^{6}{
u}^{6}{v}^{6}, 
\quad
H_{24}=-11\, {s}^{6}{t}^{6}{u}^{6}{v}^{6}.
$$
Let $H:=H_5+H_6+\cdots+H_{24}$. Let 
$$
s'=\abs(s), \; t'=\abs(t), \; u'=\abs(u),  v'=\abs(v), 
$$
be the absolute values and the transformation
$$
{\mathcal T}: {\Bbb R}[s, t, u, v]
\longrightarrow
{\Bbb R}[s', t', u', v']
$$
be defined by
$$
{\mathcal T} \left(\sum a_{d_1, d_2, d_3, d_4} s^{d_1}t^{d_2}u^{d_3}v^{d_4}\right)
=
\sum b_{d_1, d_2, d_3, d_4} {s'}^{d_1}{t'}^{d_2}{u'}^{d_3}{v'}^{d_4},
$$
where
$$
b_{d_1, d_2, d_3, d_4}
=\left\{
\begin{array}{l}
0, 
\mbox{ if all $d_1, d_2, d_3, d_4$ are even integers,  and $a_{d_1, d_2, d_3, d_4}<0$}, 
\\[3pt]
\abs(a_{d_1, d_2, d_3, d_4}), \mbox{ otherwise}.
\end{array}
\right.
$$
Then 
\begin{equation}\label{j2thee}
J(s, t, u, v)\leq H_2+H_3+H_4+\thee(s', t', u', v'), 
\end{equation}
here
$$\thee(s', t', u', v')={\mathcal T}(H_5)+\cdots+{\mathcal T}(H_{24})
$$ 
has 797 monomials,  in which ${\mathcal T}(H_{24})=0$,  and 
${\mathcal T}(H_d)\;(d=5, 6, \cdots, 23)$ has
$$
   20,  24,  44,  42,  70,  57,  88,  64,  90,  57,  74,  42,  50,  24,  26,  10,  10,  3,  2,  
$$
monomials,  respectively. It is clear that for odd $d$,  ${\mathcal T}(H_d)$ and $H_d$ have the equal number of monomials,  and for even $d$,  
the number of monomials in ${\mathcal T}(H_d)$ might have less than that in $H_d$. 

Now variables $(s', t', u', v')$ and coefficients of polynomials ${\mathcal T}(H_d)$ for $(d=5, 6, \cdots, 23)$ are all positive. 
We use the following transformation to map them to quadratic polynomials.
For each $d$,  we define a transformation over homogeneous polynomials of degree $d$ to a quadratic polynomial of $s, t, u, v$ as follows:
$$
{\mathcal S}_d
 \left(\sum_{d_1+d_2+d_3+d_4=d} 
b_{d_1, d_2, d_3, d_4} 
{s'}^{d_1}{t'}^{d_2}{u'}^{d_3}{v'}^{d_4}\right)
$$
$$
=
 \sum b_{d_1, d_2, d_3, d_4} \frac{r_0^{d-2}}{12}
\left(
(s\cdot{d'_1})^2+(t\cdot{d'_2})^2+(u\cdot{d'_3})^2+(v\cdot{d'_4})^2
\right).
$$
Here,  we take $r_0=1/7$,  and
$$
d'_1=4d_1^2-d_1,  \;
d'_2=4d_2^2-d_2,  \;
d'_3=4d_3^2-d_3,  \;
d'_4=4d_4^2-d_4,  
$$
Since we assumed that 
$-1\leq s, t, u, v\leq r_0=1/7$,  so $s', t', u', v'\in [0, 1/7]$,  and
the inequality
\begin{align*}
{s'}^{d_1}{t'}^{d_2}{u'}^{d_3}{v'}^{d_4}
&
\leq 
\frac{r_0^{d-2}}{4\times 3}
\cdot
\left(
{({d'_1}s')^2+({d'_2}t')^2+({d'_3}u')^2+({d'_4}v')^2}
\right)
\\
&
\leq 
\frac{r_0^{d-2}}{12}
\cdot
\left(
{(s\cdot{d'_1})^2+(t\cdot{d'_2})^2+(u\cdot{d'_3})^2+(v\cdot{d'_4})^2}
\right)
\end{align*}
is valid for all $d\geq 2$,  according to Lemma~\ref{inequality-for-morse-lemma}. Therefore,  we have following inequalities:
$$
{\mathcal T}(H_d)(s', t', u', v')\leq 
{\mathcal S_d}({\mathcal T}(H_d)),  \; d=5, \cdots, 23, 
$$
and
$$
{\mathcal T}(H)(s', t', u', v')\leq
\sum_{d=5, \cdots, 23} {\mathcal S_d}({\mathcal T}(H_d)):={\tt res5}.
$$
The computation of ${\tt res5}$ is as follows:
\begin{align*}
{\tt res5}
&={\frac {2223743956730603493021422\, {s}^{2}}{2198957644322995555530531}
}+{\frac {351460055057882361271126\, {u}^{2}}{377598787408999236808273}
}\\
&
+{\frac {39371575001649787465938178\, {v}^{2}}{
37382279953490924444019027}}+{\frac {2223743956730603493021422\, {t}^{2
}}{2198957644322995555530531}}
\\
&=1.01127\ldots {s}^2 + 0.93077\ldots {u}^2 + {1.05321\ldots}{v}^2 + 1.01127\ldots{t}^2.
\\
&\leq \frac{10}{9}s^2+\frac{10}{9}t^2+u^2+\frac{10}{9}v^2.
\end{align*}
Here the equality holds if and only if $s=t=u=v=0$. Thus,  the inequality \eqref{j2thee} implies that
\begin{equation}\label{j2res5}
J(s, t, u, v)\leq H_2+H_3+H_4+
\left(
\frac{5}{4}s^2+\frac{5}{4}t^2+u^2+\frac{5}{4}v^2
\right).
\end{equation}
Notice that from \eqref{xph2},  \eqref{xph3} and \eqref{xph4},  we have
\begin{align*}
H_2=
&-8\, v^2+k_{20}(s, t, u), 
\\
H_3=
&16\sqrt{2}(s+t)\, v^2+k_{30}(s, t, u), 
\\
H_4=
&-18\, v^4+k_{42}(s, t, u)\, v^2+k_{40}(s, t, u), 
\end{align*}
where $k_{20}, k_{30}, k_{42}, k_{40}$ are polynimials of $s, t, u$,  and
$$
k_{42}
=
-8\, \sqrt {2}{s}^{2}+8\, \sqrt {2}su-8\, \sqrt {2}{t}^{2}-8\, \sqrt {2}tu
-24\, \sqrt {2}{u}^{2}-48\, {s}^{2}-48\, st-48\, {t}^{2}-24\, {u}^{2}.
$$
Let $k_{22}=-8,  k_{32}=16\sqrt{2}{s+t}$ and
$$
K_2(s, t, u):=k_{22}+k_{32}+k_{42}+\frac{5}{4}.
$$
Then $K_2$ is a polynomial of $s, t, u$ of degree $2$,  and
\begin{align*}
\frac{\partial K_2}{\partial s}
&=-16\, \sqrt {2}s-96\, s-48\, t+8\, \sqrt {2}u+16\, \sqrt {2}, 
\\
\frac{\partial K_2}{\partial t}
&=-96\, t-48
\, s-8\, \sqrt {2}u-16\, \sqrt {2}t+16\, \sqrt {2}, 
\\
\frac{\partial K_2}{\partial u}
&=-48\, u+8\, \sqrt {2}
s-8\, \sqrt {2}t-48\, \sqrt {2}u.
\end{align*}
so its critical point is 
$$
\left( -{\frac{2}{79}}+{\frac {9\, \sqrt {2}}{79}}, -{\frac{2}{79}
}+{\frac {9\, \sqrt {2}}{79}}, 0 \right), 
$$
which is outside cube $[-1/7, 1/7]\times [-1/7, 1/7]\times [-1/7, 1/7]$,  which means that $K_2(s, t, u)$ has no local maximal or minimal point inside the cube. 
It is also verified that $K(0, 0, 0)=-62/9$,  and on each face of this cube,  $K(s, t, u)$ is also negative,  therefore,  we have 
$$
K_2(s, t, u)<0, 
$$ 
for all $s, t, u\in [-1/7, 1/7]$. Therefore,  we have the following inequality.
\begin{align}
J(s, t, u, v)&\leq -18v^4+K_2(s, t, u)v^2+ k_{20}+k_{30}+k_{40}+
\left(
\frac{5}{4}s^2+\frac{5}{4}t^2+u^2
\right)
\nonumber
\\
&
\leq k_{20}+k_{30}+k_{40}+
\left(
\frac{5}{4}s^2+\frac{5}{4}t^2+u^2
\right),  \label{j2stu}
\end{align}
here
\begin{align*}
k_{20}=&
-8\, \sqrt {2}{s}^{2}-8\, \sqrt {2}{t}^{2}-8\, \sqrt {2}{u}^{2}+8\, \sqrt 
{2}su-8\, \sqrt {2}tu-16\, st-8\, {t}^{2}-8\, {s}^{2}, 
\\
k_{30}=&
 -12\, \sqrt {2}\left( s+t \right) {u}^{2}+ 
4\, \sqrt {2}\left( -{s}^{2}+{t}^{2} \right) u
, 
\\
k_{40}=&
-8\, \sqrt {2}{s}^{4}-8\, \sqrt {2}{t}^{4}-8\, \sqrt {2}{u}^{4}-48\, 
\sqrt {2}{s}^{2}{t}^{2}-32\, \sqrt {2}{s}^{2}{u}^{2}
-32\, \sqrt {2}{t}^{2}{u}^{2}
\\
&
-24\, {s}^{2}{u}^{2}-24\, {t}^{2}{u}^{2}
-44\, {s}^{2}{t}^{2}
-18\, {s}^{4}-18\, {t}^{4}
\\
&
+16\, \sqrt {2}s{u}^{3}
-16\, \sqrt {2}t{u}^{3}-40\, {s}^{3}t-40\, s{t}^{3}
\\
&
-48\, st{u}^{2}-24\, \sqrt {2}{s}^{2}tu+24\, \sqrt {2}s{t}^{2}u.
\end{align*}
Applying Lemma~\ref{inequality-for-morse-lemma} we can get the following inequality for $k_{30}$. 
\begin{align}
k_{30}
&\leq
4\, \sqrt {2}\left( -{s}^{2}+{t}^{2} \right) u
\leq 
4\, \sqrt {2}{s'}^{2}u'+4\, \sqrt {2}{t'}^{2}u'
\nonumber\\
&\leq \frac{4}{3}\, \sqrt{2}(s'\cdot s'u'+ s'\cdot s'u'+ u'\cdot s' s')
+
\frac{4}{3}\, \sqrt{2}(t'\cdot t'u'+ t'\cdot t'u'+ u'\cdot t' t')
\nonumber\\
&\leq \frac{4}{21}\, \sqrt{2}(s'u'+ s'u'+ s' s')
+
\frac{4}{21}\, \sqrt{2}(t'u'+ t'u'+ t' t')
\nonumber\\
&\leq
\frac{2}{21}\, \sqrt {2} \left( 4\, {s}^{2}+2\, {u}^{2} \right) 
+
\frac{2}{21}\, \sqrt {2} \left( 4\, {t}^{2}+2\, {u}^{2} \right)
\nonumber\\
&=
{\frac {8\, \sqrt {2}{s}^{2}}{21}}+{\frac {8\, \sqrt {2}{u}^{2}}{21}}+{
\frac {8\, \sqrt {2}{t}^{2}}{21}}. \label{k30b}
\end{align}
For $k_{40}$,  
applying Lemma~\ref{inequality-for-morse-lemma} we get:
\begin{align}
k_{40}
&
\leq
16\, \sqrt {2}s{u}^{3}+16\, \sqrt {2}t{u}^{3}+40\, {s}^{3}t+40\, s{t}^{3}+
48\, st{u}^{2}+24\, \sqrt {2}{s}^{2}tu+24\, \sqrt {2}s{t}^{2}u
\nonumber\\
&\leq
\left( {\frac {22\, \sqrt {2}}{49}}+{\frac{52}{49}} \right) {s}^{2}+
 \left( {\frac {22\, \sqrt {2}}{49}}+{\frac{52}{49}} \right) {t}^{2}+
 \left( {\frac {36\, \sqrt {2}}{49}}+{\frac{24}{49}} \right) {u}^{2}.
\label{k40b}
\end{align}
Combining \eqref{j2stu}, \eqref{k30b}, \eqref{k40b},  we proved that under the assumption $-1/7<s, t, u<1/7$ and $s+t>0$, 
\begin{align}\label{j2k2345}
&J(s, t, u, v)
\leq
q_2(s, t, u):=k_{20}(s, t, u)
+
 \left( {\frac {122\, \sqrt {2}}{147}}+{\frac{453}{196}} \right) {s}^{2
}
\nonumber 
\\
&\phantom{q_2(s, t, u):=k_{20}(s, t, u)+x}
+ \left( {\frac {122\, \sqrt {2}}{147}}+{\frac{453}{196}} \right) {t}^
{2}+ \left( {\frac {164\, \sqrt {2}}{147}}+{\frac{73}{49}} \right) {u}^
{2}
\nonumber 
\\
&
=\frac{1}{2}(s, \;t, \;u)
 \left[ \begin {array}{ccc} -{\frac {2108\, \sqrt {2}}{147}}-{\frac{
1115}{98}}&-16&8\, \sqrt {2}\\ \noalign{\medskip}-16&-{\frac {2108\, 
\sqrt {2}}{147}}-{\frac{1115}{98}}&-8\, \sqrt {2}\\ \noalign{\medskip}8
\, \sqrt {2}&-8\, \sqrt {2}&-{\frac {2024\, \sqrt {2}}{147}}+{\frac{146}{
49}}\end {array} \right]
\left( \begin{array}{c}
s\\[5pt]t\\[5pt]u\end{array}
\right).
\end{align}
With computer algebra or hand computation,  it is easy now to check that
the matrix in \eqref{j2k2345}
is a negative semidefinite. Therefore,  the inequality
$$
J(s, t, u, v)\leq 0, 
$$
is valid for all $-1/7<s, t, u, v<1/7$ under the condition $s+t\geq 0$. This completes the proof of Theorem~\ref{morse-lemma}.
\end{proof}


\begin{remark}
The constant $r_0=1/7$ can be enlarged to $1/6.7845\approx 0.1473$.
\end{remark}

\section{Sketch of the Global Numerical Search}
\label{gobal-numerical-search}

In this section, we briefly describe the global numerical search process for proving Theorem~\ref{th1}. 
We need the following property of the optimal configuration. 
\begin{lemma}\label{lemma2}
Suppose that $A, B, C\in S^1$,  $D\in S^2_{\geq 0}$ and $AB+BC+CA+DA+DB+DC$ is maximal,  and
$d_1\leq d_2\leq \cdots\leq d_6$ is a permutation of $AB, BC, CA, DA, DB, DC$. Then
$$
d_1\geq 0.99200,  \;
d_2\geq 1.21895,  \;
d_3\geq 4/3, \;
d_4\geq \sqrt{2}, \;
d_5\geq 1.53137, \;
d_6\geq 1.60947.
$$
\end{lemma}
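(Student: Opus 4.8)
The plan is to discard the geometry almost completely and argue only about the multiset of the six pairwise distances $d_1\le d_2\le\cdots\le d_6$. Write $F:=4+4\sqrt{2}$. I would use just three facts. First, $d_1+d_2+\cdots+d_6\ge F$: the ``square'' configuration $A=(0,-1,0)$, $(1,0,0)$, $(-1,0,0)$, $(0,1,0)$ lies on $S^2_{\ge 0}$ and realizes the value $F$, so a maximal configuration has value at least $F$. Second, $d_1^2+d_2^2+\cdots+d_6^2\le 16$: this is the identity $\sum_{i<j}|P_i-P_j|^2=n^2\bigl(1-|\bar P|^2\bigr)$ for $n$ points $P_i$ on the unit sphere with centroid $\bar P$, specialized to $n=4$. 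Third, $d_i\le 2$ for every $i$, the diameter of the unit sphere. Note that none of these uses condition (iv) of Lemma~\ref{lemma0}, nor even the fact that three of the four points lie on the equator.

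Given these, each of the six inequalities falls out of an elementary estimate. For $k\in\{1,2,3\}$ I would set $p:=d_1+\cdots+d_k$, so that $d_k\ge p/k$ (a maximum dominates the mean), and first observe from Cauchy--Schwarz and the second fact that $p\le\sqrt{16k}=4\sqrt{k}\le 4\sqrt{3}<F$, so $F-p>0$. Applying Cauchy--Schwarz again, $d_1^2+\cdots+d_k^2\ge p^2/k$ and $d_{k+1}^2+\cdots+d_6^2\ge(d_{k+1}+\cdots+d_6)^2/(6-k)\ge(F-p)^2/(6-k)$, and substituting into $\sum d_i^2\le 16$ gives
\[
\frac{p^2}{k}+\frac{(F-p)^2}{6-k}\ \le\ 16 .
\]
This is a quadratic inequality in $p$; using the simplification $96-F^2=48-32\sqrt{2}=16(\sqrt{2}-1)^2$, its smaller root is $\tfrac16\bigl(kF-\sqrt{k(6-k)(96-F^2)}\bigr)$, and hence
\begin{align*}
d_k\ \ge\ \frac{p}{k}\ &\ge\ \frac{F}{6}-\frac16\sqrt{\frac{(6-k)(96-F^2)}{k}}\\
&=\frac{2}{3}\Bigl((1+\sqrt{2})-(\sqrt{2}-1)\sqrt{\tfrac{6-k}{k}}\Bigr).
\end{align*}
For $k=1,2,3$ this evaluates, respectively, to $\tfrac23\bigl(1+\sqrt2+\sqrt5-\sqrt{10}\bigr)=0.99200\ldots$, to $\tfrac{4\sqrt2-2}{3}=1.21895\ldots$, and to $\tfrac43$, matching the statement exactly.

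For $k=4$ the same quadratic (or simply $F\le 4d_4+d_5+d_6\le 4d_4+4$ from the first and third facts) gives $d_4\ge(F-4)/4=\sqrt2$; for $k=6$, $F\le 6d_6$ gives $d_6\ge F/6=\tfrac{2+2\sqrt2}{3}=1.60947\ldots$. The one case where the sum-of-squares estimate is genuinely too weak is $k=5$ (it yields only $\approx 1.486<1.53137$), so there I would instead use the third fact in the form $F\le(d_1+\cdots+d_5)+d_6\le 5d_5+2$, i.e.\ $d_5\ge(F-2)/5=\tfrac{2+4\sqrt2}{5}=1.53137\ldots$.

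The only thing requiring attention is precisely this split of estimates: the crude bound $d_i\le 2$ suffices for $d_4,d_5,d_6$ but is hopeless for the small distances (it gives nothing for $d_1$, and only $\tfrac{4\sqrt2-2}{3}$ rather than $\tfrac43$ for $d_3$), whereas the quadratic bound from $\sum d_i^2\le 16$ recovers the sharp values for $d_1,d_2,d_3,d_4,d_6$ but underestimates $d_5$; the lemma is the ``best of the two''. I expect the only real work to be the routine verification that each quadratic in $p$ has the claimed algebraic number as its relevant root --- that is, the identity for $96-F^2$ and the subsequent arithmetic with the nested radical $\sqrt{10}-\sqrt5$ for $k=1$ --- which is mechanical and can be carried out by hand or checked with computer algebra.
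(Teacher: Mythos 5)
Your argument is correct, and I checked the details: the identity $\sum_{i<j}|P_i-P_j|^2=n\sum_i|P_i|^2-|\sum_i P_i|^2\le 16$ for $n=4$ unit vectors is right; the squaring step $d_{k+1}+\cdots+d_6\ge F-p>0$ is properly justified by your Cauchy--Schwarz check $p\le 4\sqrt{k}<F$ for $k\le 3$; the quadratic $6p^2-2kFp+kF^2-16k(6-k)\le 0$ and its smaller root $\tfrac16\bigl(kF-\sqrt{k(6-k)(96-F^2)}\bigr)$ with $96-F^2=16(\sqrt{2}-1)^2$ are correct; and the resulting values $0.99200\ldots$, $1.21895\ldots$, $4/3$, $\sqrt{2}$, $(2+4\sqrt{2})/5=1.53137\ldots$, $(2+2\sqrt{2})/3=1.60947\ldots$ all dominate the stated bounds. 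A direct comparison with the paper is not possible here: the authors explicitly omit the proof of this lemma (``we will not show the proof \dots because of space limitation'') and defer it to a separate publication, so the paper contains nothing to measure your route against. That said, the fact that every constant in the lemma is exactly the rounded value of the algebraic expressions you obtain --- including the tell-tale switch at $d_5$, where the sum-of-squares bound would give only about $1.486$ while the crude $d_6\le 2$ bound gives $1.53137\ldots$ --- strongly suggests your elementary argument (feasibility of the square giving $\sum d_i\ge 4+4\sqrt{2}$, the moment identity giving $\sum d_i^2\le 16$, the diameter bound, and ordering/Cauchy--Schwarz) is essentially the intended one; in any case it is self-contained, uses no geometry beyond the three stated facts (in particular neither Lemma~\ref{lemma0} nor the equator condition), and would serve as a complete proof of Lemma~\ref{lemma2}.
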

We will not show the proof of Lemma~\ref{lemma2} here because of space limitation. The global numerical search 
for Theorem~\ref{th1} is composed of three steps as follows. 

{\bf Step 1.} Dividing the unit square $[-1, 1]\times [-1, 1]\subset {\Bbb R}^2$ into $256$ squares of edge $1/8$,  
then we will see that $224$ of them have non-empty intersection with 
$D^2=\{(x, y)|x^2+y^2\leq 1\}$,  and $60$ of them have non-empty intersection with $S^1=\{(x, y)|x^2+y^2=1\}$.
From them,  we build a set of $N_1=60\times 60\times 224=806, 400$ cubes of edge $1/8$ (called $\frac{1}{8}$-cubes) in ${\Bbb R}^6$  
to cover the feasible set $(S^1)^2\times D^2$ of Problem~\eqref{warp-distance-programming}.
Applying Lemma~\ref{lemma2} to test the bounds of six distances (called the {\it distance bound test\/}) we
 can see that only  $11\times 11\times 88=10, 648$ are possible combinations
to locate the optimal solution of Problem~\ref{warp-distance-programming}; applying the exact numerical
computation to estimate the sum of the six distances (called the {\it distance sum test\/}) we see that
among the $10, 648$ $\frac{1}{8}$-cubes there are only $4, 300(=0.533\%\times N_1)$ need to be divided and checked in the next round. 
The first round checking has used $15.922$ seconds on a notebook computer with Intel COREi7 8th Gen CPU.

{\bf Step 2.} Partition each $\frac{1}{8}$-cube into $16\times 16\times 16=4, 096$ equal cubes of edge $1/32$ (called
$\frac{1}{32}$-cubes),  we 
get a set of $4, 300\times 4, 096=17, 612, 800$ cubes in ${\Bbb R}^6$,  
among them there are $N_2=1, 105, 782$ having non-empty intersection with $(S^1)^2\times S^1\times D^2$. 
Apply the {\it distance bound test\/}) we can verify that $844, 917$ are possible 
to locate the optimal solution.  From these suspect cubes,  remove $2, 048$ $\frac{1}{32}$-cubes that are 
contained $U\times V\times W_1$,  where $U, V, W$ are defined in Theorem~\ref{th1} and $W_1$ is the $z$-projection defined in \eqref{z-projection},  
and do the {\it distance sum test\/} so to remove another $823, 663$ $\frac{1}{32}$-cubes that are clearly non-optimal combination. 
Therefore,  there are $844, 917-2, 048-823, 663=19, 206(=1.737\%\times N_2)$ cubes of edge $1/32$ to be ckecked further.
Computation in this step has used $1, 170.266$ seconds. 

{\bf Step 3.} In the first two steps we do {\it breadth-first search\/} (BFS),  
in this step we do {\it deep-first search\/} (DFS) on each $\frac{1}{32}$-cube using only 
{\it distance sum test\/}. Namely,  we take one $\frac{1}{32}$ and divide it to $4, 096$ cubes of edge $1/128$ and estimate
the sum of distance,  if some of the $\frac{1}{128}$ have not passed the test,  we take first of them and divide it into $4, 096$ cubes of 
edge $1/512$,  and so on,  and return to parent level until all children-cubes passed the test.  
The computation has shown that among the $19, 206$ $\frac{1}{32}$-cubes,   $19, 107$ has passed the DFS checking on children-cubes of edge $1/128$,  
and the rest $199$ cubes passed on when the edge of children-cubes is $1/512$.  The $DFS$ computation has been completed in $8, 777.250$ seconds.

The exact numerical computation in this part and the symbolic computation in Section~\ref{how-large-is-small-neighbourhood}
are implemented with the computer algebra software Maple.  
We will publish the proofs of Lemma~\ref{lemma2} and Theorem~\ref{th1} in full in the Maple Conference 2021.


\providecommand{\urlalt}[2]{\href{#1}{#2}}
\providecommand{\doi}[1]{doi:\urlalt{http://dx.doi.org/#1}{#1}}

%
%
%
%

\end{document}